\documentclass[twocolumn,secnumarabic,nobibnotes,superscriptaddress,aps,nofootinbib,pra,10pt]{revtex4-2}
\usepackage{color, xcolor, colortbl}
\usepackage{graphicx}
\usepackage{amsthm,amsmath,amssymb,amsfonts}
\usepackage{dcolumn}
\usepackage{url}
\usepackage{epstopdf}
\usepackage{enumitem}
\usepackage{algorithm}
\usepackage{algpseudocode}
\usepackage{bm}
\usepackage[caption=false]{subfig}
\usepackage{appendix}
\usepackage{multirow}
\usepackage{braket}
\usepackage[english]{babel}
\usepackage{hyperref}
\usepackage[capitalize]{cleveref}
\usepackage{xpatch}
\usepackage{tikz}
\usepackage{adjustbox}
\usepackage{xspace}

\newcommand{\Or}{\mathcal{O}}

\newcommand{\RR}{\mathbb{R}}

\renewcommand{\Re}{\operatorname{Re}}


\newcommand{\mc}[1]{\mathcal{#1}}

\newcommand{\norm}[1]{\left\lVert#1\right\rVert}

\newtheorem{thm}{\protect\theoremname}
\newtheorem{lem}[thm]{\protect\lemmaname}

\newtheorem{defn}[thm]{\protect\definitionname}

\newtheorem{result}[thm]{Result}

\providecommand{\definitionname}{Definition}
\providecommand{\assumptionname}{Assumption}
\providecommand{\corollaryname}{Corollary}
\providecommand{\lemmaname}{Lemma}
\providecommand{\propositionname}{Proposition}
\providecommand{\remarkname}{Remark}
\providecommand{\theoremname}{Theorem}
\usepackage{bbm}

\usepackage{silence}
\WarningFilter{revtex4-2}{Repair the float}
\makeatletter

\makeatother

%

\usetikzlibrary{fit}
\tikzset{%
  highlight/.style={rectangle,rounded corners,fill=blue!15,draw,fill opacity=0.3,thick,inner sep=0pt}
}

%


\usepackage{booktabs}
\newcolumntype{L}[1]{>{\raggedright\arraybackslash}p{#1}}
\usepackage{enumitem}

\renewcommand{\tablename}{Box}
\renewcommand{\thetable}{\arabic{table}}

\begin{document}

	\title{Ensemble-Based Quantum Signal Processing for Error Mitigation }

\author{Suying Liu}
\thanks{These authors contribute equally.}
    \affiliation{Joint Center for Quantum Information and Computer Science, University of Maryland, College Park, MD 20742, USA.} 
    \affiliation{Department of Computer Science, University of Maryland, College Park, MD 20742, USA.}

    \author{Yulong Dong}
    \thanks{These authors contribute equally.}
    \email[Electronic address: ]{dongyl@umich.edu}
\affiliation{Department of Electrical Engineering and Computer Science, University of Michigan, Ann Arbor, MI 48109, USA}

        \author{Dong An}
        \thanks{These authors contribute equally. This research was conducted during the author's postdoctoral fellowship at the University of Maryland.}
    \affiliation{Joint Center for Quantum Information and Computer Science, University of Maryland, College Park, MD 20742, USA.} 

\author{Murphy Yuezhen Niu}
	\email[Electronic address: ]{murphyniu@ucsb.edu}
	\affiliation{Google Quantum AI, Venice, California 90291, USA.}
  \affiliation{Department of Computer Science, University of California, Santa Barbara, California, 93106, USA.}

\begin{abstract}
Despite rapid advances in quantum hardware, noise remains a central obstacle to deploying quantum algorithms on near-term devices. In particular, random coherent errors that accumulate during circuit execution constitute a dominant and fundamentally challenging noise source. We introduce a noise-resilient framework for Quantum Signal Processing (QSP) that mitigates such coherent errors without increasing circuit depth or ancillary qubit requirements. Our approach uses ensembles of noisy QSP circuits combined with measurement-level averaging to suppress random phase errors in Z rotations. Building on this framework, we develop robust QSP algorithms for implementing polynomial functions of Hermitian matrices and for estimating observables, with applications to Hamiltonian simulation, quantum linear systems, and ground-state preparation. We analyze the trade-off between approximation error and hardware noise, which is essential for practical implementation under the stringent depth and coherence constraints of current quantum hardware. Our results establish a practical pathway for integrating error mitigation seamlessly into algorithmic design, advancing the development of robust quantum computing, and enabling the discovery of scientific applications with near- and mid-term quantum devices.

\end{abstract}

\maketitle

\section{Introduction}
Quantum computers promise exponential improvement over classical computers for solving computational problems ranging from quantum simulation to linear and nonlinear systems \cite{BerryChildsKothari2015,ChildsKothariSomma2017,LowChuang2017,GilyenSuLowEtAl2019,MotlaghWiebe2023}. Quantum Signal Processing (QSP)~\cite{LowChuang2017} and its generalizations, such as qubitization~\cite{LowChuang2019} and Quantum Singular Value Transformation (QSVT)~\cite{GilyenSuLowEtAl2019}, provide a unified framework~\cite{martyn2021grand} for implementing general linear transformations with sufficiently accurate polynomial approximations. By encoding operator dynamics into a collection of two-dimensional invariant subspaces and coherently transforming them through phase-controlled evolutions, QSP enables the efficient implementation of polynomial transformations using minimal ancillary resources and optimal algorithmic complexity \cite{LowChuang2017}. Due to its efficiency, generality and simplicity, QSP and its variants soon become a powerful quantum primitive and have been widely applied to designing quantum algorithms for Hamiltonian simulation problems~\cite{LowChuang2017,LowChuang2019,GilyenSuLowEtAl2019}, solving linear systems of equations~\cite{ChakrabortyGilyenJeffery2018,GilyenSuLowEtAl2019}, eigenstate filtering and preparation~\cite{LinTong2020,LinTong2020b}, and other tasks involving polynomial transformations of Hermitian matrices. 

An important limitation of existing QSP algorithms is their necessary assumption of negligible error rates compared to algorithmic accuracy. Without a fault-tolerant quantum computer and perfect gates, QSP-based algorithms can neither achieve optimality nor provide reliable results. Unfortunately, existing quantum computers face significant errors many magnitudes higher than QSP requirements due to control flaws and environmental coupling \cite{andersen2025thermalization}. These systems, known as Near-term Intermediate Scale Quantum (NISQ) systems, lack built-in error correction \cite{analogrobust2024}. While error cancellation has been proposed within the QSP framework \cite{tan2023error,tan2023perturbative}, these methods generally introduce significant quantum overhead and rely on assumptions that are unlikely to hold in practice. Beyond these approaches, error mitigation techniques are widely employed to enhance the performance of quantum algorithms on NISQ devices. Standard methods include zero-noise extrapolation~\cite{temme2017error}, which infers ideal observables by systematically varying noise strength, post-selection schemes that exploit system-specific symmetries~\cite{google2020hartree}, and protocols that leverage the intrinsic robustness of certain observables against specific noise sources~\cite{mi2021information, arute2020observation}. While effective in targeted settings, these techniques are typically tailored to particular observables or restricted noise models, which limits their applicability to general-purpose quantum algorithm implementations.

A major source of error in NISQ devices is phase error in Z rotations, notoriously difficult to eliminate, causing both coherent~\cite{lucero2010reduced,rol2020time} and incoherent errors~\cite{wudarski2023characterizing} that compromise quantum algorithm implementations. 
Such errors can distort simulations, affecting free fermionic terms and interactions between spins or charges~\cite{arute2020observation}, and limit the simulable system size without error correction~\cite{google2020hartree}. As NISQ technology advances toward tens of thousands of qubits, approaching the scale needed for QSP applications, developing QSP frameworks that accommodate these imperfections will be critical for enabling impactful applications. Achieving this requires replacing a separate, resource-intensive quantum error correction layer with efficient, in-algorithm error mitigation and correction schemes.

In this work, we tackle this critical bottleneck by proposing a new framework for QSP to mitigate and eventually eliminate the effect of systematic random errors. Specifically, we show that random phase misrotation error in Z gates can be corrected by leveraging ensembles of noisy QSP circuits through measurement-level averaging. Building on this idea, we develop an Ensemble-based QSP algorithms for polynomial transformations of Hermitian matrices and for observable estimation, with applications to Hamiltonian simulation, quantum linear systems, and ground-state preparation. We analyze the trade-off between approximation error and hardware noise. This favorable balance is essential for practical implementation under the stringent depth and coherence constraints of current quantum hardware. Consequently, our method not only bolsters the reliability and accuracy of QSP algorithms but also establishes a new paradigm for integrating error mitigation into quantum algorithm design. Through this integration, we pave the way for more robust and accurate quantum computing applications, where error mitigation is an inseparable part of the algorithmic design.

\section{Results}
\subsection{Quantum Signal Processing and Matrix Polynomials}
The goal of this work is to integrate quantum error correction directly into quantum signal processing (QSP) circuits, enabling the robust deployment of quantum algorithms under realistic noise models. To develop our main methods and state our principal theorems, we first introduce several essential components.

Many important quantum algorithms operate on non-unitary matrices, whereas quantum circuits can only implement unitary evolutions. Block-encoding~\cite{GilyenSuLowEtAl2019} provides a powerful framework to bridge this gap by embedding a general matrix $A$ into a larger unitary operator $U_A$. Specifically, a matrix $A$ is $(\alpha,m,\epsilon)$-block encoded in unitary $U_A$ if 
\begin{align}
    \| A - \alpha (\langle 0|^m \otimes I) U (|0\rangle^m \otimes I) \| \leq \epsilon,
\end{align}
where $\alpha$ is the normalization factor, $m$ is the number of ancilla qubits and $\epsilon$ is the block-encoding precision. 
With the help of the block-encoding model, we can perform basic linear algebra operations such as addition and multiplication for general matrices on a quantum computer~\cite{GilyenSuLowEtAl2019}.

Beyond these basic operations, QSP~\cite{LowChuang2019,GilyenSuLowEtAl2019,LowChuang2017} serves as a systematic method for implementing polynomial transformations $f(A)$ of a general matrix $A$ by interleaving the block encoding with simple rotation gates. Consider the reflection signal operator $R(x)$ \footnote{Here we adopt the reflection-based convention of QSP \cite[Corollary 8]{GilyenSuLowEtAl2019}, where the signal operator is defined as the reflection gate $R(x)$ in \eqref{eqn:R gate}, instead of the standard QSP formulation \cite[Theorem 3]{GilyenSuLowEtAl2019}, which utilizes a rotation-based signal operator.} which encodes the input $x$, the length-$d$ QSP sequence with a specific set of phase factors $\Phi = (\phi_1,\cdots,\phi_d)$, transforms the entry towards a complex polynomial $P(x)$, i.e.,
    \begin{equation}
         U(x) := \prod_{j = 1}^d e^{i \phi_j Z} R(x) = \left( \begin{array}{cc}
            P(x) & \cdot \\
            \cdot & \cdot
        \end{array} \right), 
    \end{equation}
    where 
    \begin{equation}
        R(x) = \left(\begin{array}{cc}
            x & \sqrt{1-x^2} \\
            \sqrt{1-x^2} & -x
        \end{array}\right).\label{eqn:R gate}
    \end{equation}
In particular, given a real polynomial $p(x)$ satisfying the conditions listed in \cref{lem:QSP}, there exists a set of phase factors such that quantum signal processing implements a corresponding complex polynomial $p(x)$ on a block of $U(x)$. Specifically, the resulting transformation satisfies $\bra{0} U(x) \ket{0} = P(x),$ where the real part of $p(x)$ coincides with the target polynomial~\cite{GilyenSuLowEtAl2019}. 

The QSP framework can be further extended to implement polynomial transformations of a linear operator. This extension, known as Quantum Singular Value Transformation (QSVT)~\cite{GilyenSuLowEtAl2019}, lifts the scalar QSP sequence to an operator-level transformation that effectively applies a polynomial $p(x)$ to each singular value of the linear operator within its corresponding invariant two-dimensional subspace. Specifically, given a Hermitian matrix $A$ with $\|A\| \le 1$ and its $(\alpha, m, \epsilon)$-block-encoding $U_A$, the block-encoding of the matrix function $P(A)$ is realized by sequentially applying alternating queries to $U_A$ and controlled-phase rotations, where the phase factors are computed as in \cref{lem:QSP}. Since the QSP sequence yields a complex polynomial $P(x)$, the block-encoding of the desired real polynomial $p(A) = \text{Re}(P(A))$ can be extracted by computing $\frac{1}{2}(P(A)+P(A)^{\dagger})$, implemented via Linear Combination of Unitaries (LCU) circuits. The quantum circuit for achieving the implementation is shown in Fig.\ref{fig:QSP}.

\subsection{An Ensemble Method against Random Coherent Errors}\label{sec:ensemble_QSP_method}
In experimental implementation, the quantum circuit may be subjected to quantum noises and errors. In quantum platforms, such as superconducting qubits, one major source of errors is the random misrotation in $Z$-rotation gates caused by qubit frequency noise~\cite{lucero2010reduced,rol2020time,wudarski2023characterizing}. Such random error brings the desired rotation angle to a random value that deviates from the ideal case and distorts the implementation results. Furthermore, due to its random nature, these errors are hard to calibrate, while its mitigation necessitates sophisticated quantum error correction modules with significant resource overheads. In this section, we propose a framework to mitigate such random coherent errors, drawing inspiration from the ensemble method in classical machine learning techniques. 

We first assume that the random coherent error occurs in the rotation gates used to implement QSP phase-factor rotations. A general case when block-encoding oracles are subjected to random errors is discussed later in this subsection. The random coherent error is described as 
\begin{equation}\label{RzrotationEq}
    R_z(\phi; \varepsilon) = e^{- i (\phi + \varepsilon) Z},
\end{equation}
where $\phi$ is the desired rotation angle and $\varepsilon$ is a random error with zero mean and variance $\nu$. A key observation is that the expectation value of the rotation gate coincides with the desired rotation up to a multiplicative factor, namely, $\mathbb{E}_\varepsilon(R_z(\phi; \varepsilon)) = \mathbb{E}_\varepsilon(\cos(\varepsilon) R_z(\phi; 0) - i \sin(\varepsilon) R_z(\phi; 0)) = \mathbb{E}_\varepsilon(\cos(\varepsilon)) R_z(\phi; 0)$. This motivates us to consider distilling the correct quantum information by averaging multiple experimental realizations. 
We show in~Algorithm~\ref{alg:block_encoding} how to accurately implement polynomial transformations of matrices with QSP circuits subjected to random coherent errors. Our approach is to first implement the QSP circuit repeatedly multiple times, and then linearly combine them using the quantum Linear Combination of Unitaries (LCU) technique~\cite{ChildsWiebe2012}. Consequently, the overall algorithm constructs a new block encoding which is an average of multiple noisy implementations. Here, we assume the dominant error in LCU is in the physical Z rotation and thus translate to the same additive angle errors in the phase factors.  Following the averaged argument of individual gates, such averaged noisy implementations of matrix polynomial should approximate the noiseless one. We quantify the algorithm performance as follows, where the formal version and complete proof can be found in \cref{thm:noisy_QSP_block_encoding}. 

\begin{result}[EnQSP for matrix polynomials]\label{res:QSP_block_encoding}
    Let $p \in \RR_d[x]$ be a bounded real polynomial. Under the assumptions in \cref{sec:error_model}, using 
    $M = \Or(e^{ d \nu} \epsilon^{-2})$ 
    noisy implementations, a block encoding of $\mathcal{P}$ can be derived so that 
    $\norm{\mc{P}/\alpha  - p(A)} \le \epsilon$ 
    with a constant probability of success. Here, 
    $\alpha = e^{\frac{1}{2}d \nu}$ 
    is a multiplicative scaling factor. The total number of queries to the matrix block encoding $U_A$ scales as 
    $\Or(dM) = \Or(d e^{d \nu} \epsilon^{-2})$.
\end{result}

Though the result is formulated as the case where only phase-angle rotations are subjected to random coherent errors, it can be generalized to the case when additional random coherent errors may occur in the block encoding $U_A$. In the general case with total $g$ operations subjected to random coherent errors, the scaling factor and the repetition become $\alpha = e^{g \nu}$ and $M = \Or(\alpha^2 \epsilon^{-2})$ respectively. 

We also remark that the inverse scaling factor $1 / \alpha$ represents the signal-to-noise ratio in the original QSP implementation. When decomposing all phase-angle rotations, we have $\mc{Q}(U_A, \vec{\phi}; \vec{\varepsilon}) = (\prod_j \cos(\varepsilon_j)) \mc{Q}(U_A, \vec{\phi}; \vec{0}) + \text{error}$. Here, the first term is the noiseless implementation, which gives $\mc{Q}(U_A, \vec{\phi}; \vec{0}) = U_{p(A)}$, and the second error term contains at least one sine component of the random error. Taking expectation over random coherent errors, we have $\mathbb{E}(\mc{Q}(U_A, \vec{\phi}; \vec{\varepsilon})) \approx \alpha^{-1} U_{p(A)}$. Consequently, when the signal-to-noise ratio is vanishingly small, i.e., the scaling factor $\alpha$ is extremely large, the correct quantum information in the noisy implementation is vanishingly small. Then, the corresponding error correction problem is ill-posed, where to the best of our knowledge, no error mitigation techniques could retrieve that vanishingly small correct information. Our result is also consistent because the resource overhead in \cref{res:QSP_block_encoding} becomes increasingly large. Our result is most efficient in the regime where the signal-to-noise ratio is constantly large $\alpha = \Or(1)$ or depends very weakly on system parameters, such as the number of qubits. 

It is also worth noting that our proposed method can be integrated into a larger quantum algorithm as a modular subroutine. Moving one step ahead, using $\Or(\alpha)$ number of queries to the block encoding of $\mc{P}$ as discussed in \cref{res:QSP_block_encoding}, a block encoding that is $\epsilon$-close to the desired polynomial transformation $p(A)$ can be constructed using oblivious amplitude amplification \cite{OAA}. This error-corrected implementation enables versatile downstream applications using the subroutine based on it.

\subsection{Balance between Algorithmic Errors and Implementation Errors}

As in \cref{res:QSP_block_encoding}, our method provides a way to mitigate random coherent errors to a certain level, which is referred to as \emph{implementation error}. In scientific applications, though in the ideal noiseless case, the algorithm is derived so that the computational results are close to the target solution. This error is referred to as \emph{algorithmic error}. In this subsection, we will show how to balance these error sources.

Many tasks in scientific applications are described by matrix functions. For example, preparing the ground state of a Hamiltonian is equivalent to implementing a mapping from the Hamiltonian matrix $H = \sum_i E_i \ket{\psi_i}\bra{\psi_i}$ to the projection onto its ground-state subspace $\ket{\psi_0}\bra{\psi_0}$. In terms of the transformation of eigenvalues, it means implementing a matrix function $f(H)$ where $f(x)$ is a filter function taking the value zero at all excited-state energies and taking the value one at the ground-state energy. Many Quantum Eigenvalue Transformation-based techniques, such as QSP, provide an elegant way to implement polynomial matrix transformations. It is well established in approximation theory that smooth functions can be approximated by polynomials. That means that we can find an approximation polynomial $p \in \RR_d[x]$ that is close to the target smooth function $\Vert p(x) - f(x) \Vert_\infty \le \epsilon_\text{alg}$. Here, the approximation error is referred to as the \emph{algorithmic error} which stands for the error raised from algorithm construction, though without any implementation error. The approximation efficiency in the case of smooth target functions is captured by the polynomial degree requirement $d = \Or(s \log(1 / \epsilon_\text{alg}))$ where $s$ represents some functions of the system parameters. For example, in the case of ground state preparation, $s = \norm{H} / \Delta$ where the spectral gap $\Delta$ is equal to the difference between the first excited state energy and ground state energy. Following an argument based on the triangle inequality, the total error is equal to the sum of algorithmic error and implementation error:
\begin{equation}
\begin{split}
    & \epsilon_\text{tot} := \Vert \mc{P}/\alpha - f(H) \Vert\\
    &\le \Vert \mc{P}/\alpha - p(H) \Vert + \Vert p(H) - f(H) \Vert \le \epsilon_\text{imp} + \epsilon_\text{alg}.
\end{split}
\end{equation}

When mitigating algorithmic error, a deeper circuit with more queries is favored, while the implementation error in the second term still bottlenecks the total error. However, deeper circuitry renders the scaling factor $\alpha$ larger, and therefore makes the error correction harder in the presence of a much smaller signal-to-noise ratio. Conversely, shorter circuits make error correction of implementation errors easier, but the algorithmic error dominates. Consequently, balancing these two error resources is important to make the error-corrected algorithm resource-efficient and error-resilient.

The number of repeated implementations scales as $M = \Or(e^{\Or(\nu s \log(1/\epsilon_\text{alg}))} \epsilon_\text{imp}^{-2}) = \Or(\epsilon_\text{alg}^{- \Or(\nu s)} \epsilon_\text{imp}^{-2}) = \Or(\epsilon_\text{tot}^{-(2 + \Or(\nu s))})$, and query depth scales as $d = \Or(s \log(1/\epsilon_\text{tot}))$ in the case of a balanced choice of errors. Though the dependency of system parameters occurs in the exponent of $M$-scaling, we note that there is also a preconstant of error parameter $\nu$. In the case that the error is sufficiently small, $\nu s \ll \Or(1)$, the actual scaling is reasonable. Since the system parameter is also closely related to query depth $s \sim d$, the requirement $\nu d \ll \Or(1)$ is consistent with the signal-to-noise requirement discussed in \cref{sec:ensemble_QSP_method} to avoid ill-posedness of the error mitigation problem.

\subsection{Applications in Scientific Computing}

\begin{table*}[htbp]
\begin{tabular}{c|c|c|c}
\hline
Task                                                                     & Query Depth & Ensemble Size (Classical Repetition) & Total Query Complexity \\ \hline
\begin{tabular}[c]{@{}c@{}}Hamiltonian \\ Simulation\end{tabular}        &    $\mathcal{O}\left( \|H\|T + \log(1/\epsilon) \right)$         &    $\mathcal{O}\left( e^{d\nu} \epsilon^{-2}\right)$                  &           $\mathcal{O}\left(\left( \|H\|T + \log(1/\epsilon) \right) e^{\mathcal{O}(d\nu)}\epsilon^{-2}\right)$                \\ \hline
\begin{tabular}[c]{@{}c@{}}Quantum Linear \\ System Problem\end{tabular} &       $\mathcal{O}\left(\kappa \log(\kappa/\epsilon)\right)$         &     $\mathcal{O}\left(e^{d\nu}\kappa^2 \epsilon^{-2}  \right)$                                   &        $\mathcal{O}\left(\log(\kappa/\epsilon)e^{\mathcal{O}(d\nu)}\kappa^3 \epsilon^{-2}  \right)$                  \\ \hline
\begin{tabular}[c]{@{}c@{}}Ground State \\ Preparation\end{tabular}      &       $\mathcal{O}\left(\Delta^{-1} \log(1/\epsilon)\right)$          &   $\mathcal{O}\left(e^{3d\nu/2}\gamma^{-1} \epsilon^{-2}  \right)$                                     &  $\mathcal{O}\left(\Delta^{-1} \log(1/\epsilon)e^{\mathcal{O}(d\nu)}\gamma^{-1} \epsilon^{-2}  \right)$                       \\ \hline
\end{tabular}
\caption{Complexity results for scientific computing applications. }
\label{tab:my-table}
\end{table*}

Our algorithms are broadly applicable across scientific computing tasks due to the generality of the QSP framework. To demonstrate its performance in practice under system noise, here we study three key applications: Hamiltonian simulation, quantum linear systems problem, and ground state preparation problem.

\textit{1. Hamiltonian simulation}:  
Given a Hamiltonian $H$, the Hamiltonian simulation task aims to prepare a quantum state $e^{-iHT}\ket{\psi}$ at a given time $T$. Since the unitary evolution $e^{-iHT}$ admits an accurate low-degree polynomial approximation, quantum signal processing (QSP) and quantum singular value transformation (QSVT) have been employed to construct optimal Hamiltonian simulation algorithms~\cite{LowChuang2017,GilyenSuLowEtAl2019}. In these frameworks, implementing $e^{-iHT}$ to algorithmic accuracy $\epsilon_{\text{alg}}$ requires truncating the polynomial expansion to degree $d = \mathcal{O}(\|H\|T + \log(1/\epsilon_{\text{alg}}))$. In the presence of noise, we employ the EnQSP algorithm to mitigate implementation errors. To achieve an implementation accuracy of $\epsilon_{\mathrm{imp}}$, $M=\mathcal{O}(e^{d\nu} \epsilon_{\mathrm{imp}}^{-2})$ repetitions are required. As discussed in the last section, a smaller algorithmic error $\epsilon_{\text{alg}}$ corresponds to a deeper circuit while adding a larger repetition overhead to mitigate implementation error $\epsilon_{\mathrm{imp}}$. With a balanced choice of both errors, i.e., $\epsilon:= \epsilon_{\text{alg}} =\epsilon_{\mathrm{imp}}$, the complexity of our algorithms is summarized below.

\begin{result}[Informal version of~\cref{thm:app_Hsim}]
    With only noisy access to Z rotations, we can approximate the state $e^{-iHT}\ket{\psi}$ with ensemble size $\mathcal{O}(e^{d\nu} \epsilon^{-2})$ and quantum circuits with depth $d = \mathcal{O}\left( \|H\|T + \log(1/\epsilon) \right)$, the total query complexity is $\mathcal{O}\left(\left( \|H\|T + \log(1/\epsilon) \right) e^{\mathcal{O}(d\nu)}\epsilon^{-2}\right)$. 
\end{result}
The query complexity is $\mathcal{O}\left(\left( \|H\|T + \log(1/\epsilon) \right) \epsilon^{-2}\right)$, polynomial scaling in system parameters (evolution time $T$ and the Hamiltonian spectral norm $\|H\|$) and has polynomial precision dependence when signal-to-noise ratio $1/\alpha = e^{-d\nu}$ is constantly large.

\textit{2. Quantum linear systems problem (QLSP)}: 
The goal of QLSP is to prepare a quantum state encoding $x/\|x\|$, where $x = A^{-1}\ket{b}$ is the possibly unnormalized solution of the linear system with an inevitable matrix $A$ and a quantum state $\ket{b}$. 
Without loss of generality, we further assume that $A$ is a Hermitian matrix with condition number $\kappa$, because any linear system of equations with a general invertible matrix can be reduced to an equivalent one with a Hermitian matrix by the dilation trick~\cite{HarrowHassidimLloyd2009}. The function $A^{-1}$ can be accurately implemented via polynomial approximation using the QSP construction introduced in~\cite{GilyenSuLowEtAl2019}, where truncating the polynomial approximation to degree $d = \mathcal{O}(\kappa \log(\kappa / \epsilon_{\mathrm{alg}}))$ ensures an algorithmic accuracy of $\epsilon_{\mathrm{alg}}$. Further, considering the coherent random errors in the QSP circuits, we can get a robust implementation by utilizing the EnQSP algorithm. With $M =\mathcal{O}(e^{d\nu} \epsilon_{\mathrm{imp}}^{-2}) $ repetition overheads, we can ensure $\epsilon_{\mathrm{imp}}$ implementation precision. Balancing the trade-off between the algorithm error and the implementation error,i.e., setting $\epsilon = \epsilon_{\mathrm{alg}}=\epsilon_{\mathrm{imp}}$, we show the complexity of correcting the noise in solving the QLSP as follows.  
\begin{result}[Informal version of~\cref{thm:app_QLSP}]
    With only noisy access to Z rotations, we may approximate the state $A^{-1}\ket{\psi}/\|A^{-1}\ket{\psi}\|$ with ensemble size $\mathcal{O}\left(e^{d\nu}\kappa^{2}\epsilon^{-2}\right)$ and quantum circuits with depth $d = \mathcal{O}\left(\kappa \log(\kappa/\epsilon)\right)$, 
    The total query complexity is $\mathcal{O}\left(\log(\kappa/\epsilon)e^{\mathcal{O}(d\nu)}\kappa^3 \epsilon^{-2}  \right)$ in both cases. 
\end{result}
Again, our algorithms can achieve favorable scalings. In the regime where the signal-to-noise ratio is constantly large, i.e., $1/\alpha = \Or(1)$, the overall complexity becomes $\mathcal{O}\left(\log(\kappa/\epsilon)\kappa^3 \epsilon^{-2}  \right)$ which scales polynomial with respect to both the system parameter (i.e., the condition number $\kappa = \|A\|\|A^{-1}\|$) and the total precision $\epsilon$. Although the overall complexity dramatically grows as the condition number of $A$ increases, the circuit depth for preparing the solution state remains the same as the noiseless QSP circuit even for ill-conditioned systems with large $\kappa$. \\

\textit{3. Ground state preparation}:
For the ground state preparation problem, the objective is to prepare the ground state $\ket{\psi_0}$ of a properly shifted and scaled Hamiltonian $H$, starting from an initial state $\ket{\phi}$ that has an overlap $\gamma = \braket{\phi | \psi_0}$ with the ground state. This problem can be addressed by implementing a filter function $f(x)$ that preserves the ground-state component while suppressing contributions from all excited energies, following the design in~\cite{Dong_Lin_Tong_2022}. The particular filter function, which incorporates a cosine transformation in its argument, can be approximated to accuracy $\epsilon_{\mathrm{alg}}$ by a polynomial of degree $d = \mathcal{O}\!\left( \Delta^{-1} \log(1/\epsilon_{\mathrm{alg}}) \right)$, where $\Delta$ is the spectral gap of the shifted and scaled Hamiltonian $H$. 

This consine transformation can be realized by quantum eigenvalue transformation of unitary matrices (QETU)~\cite{Dong_Lin_Tong_2022}. Let $H$ be a properly shifted and scaled Hamiltonian so that it is positive semidefinite and bounded $0 \prec H \prec \pi I$. In terms of Hamiltonian evolution, the scaling and shifting can be implemented by changing the evolution time and adding phase gates, respectively. Suppose the controlled Hamiltonian evolution is accessible, namely, $\ket{0}\bra{0} \otimes e^{i H} + \ket{1}\bra{1} \otimes e^{- i H}$. Then, sandwiching the controlled time evolution with two Hadamard gates on the control qubit, we can access a block encoding of the cosine-transformed Hamiltonian $\cos(H)$, which we refer to as a \textit{cosine block encoding}. By employing the cosine block-encoding realized through QSP circuits with depth $d$, we can efficiently implement the corresponding filter function for the ground state preparation problem. Then we further consider the noisy QSP circuits and utilize the EnQSP algorithm to mitigate the implementation error to achieve accuracy $\epsilon_{\mathrm{imp}}$ by $M = \mathcal{O}(e^{d\nu}\epsilon_{\mathrm{imp}}^{-2})$ repetitions. Under the balanced error condition ($\epsilon:= \epsilon_{\mathrm{alg}} = \epsilon_{\mathrm{imp}}$), the complexity of the ground state preparation problem with noise can be summarized as follows.

\begin{result}[Informal version of~\cref{thm:app_GSP}]
    With only noisy access to Z rotations, we may approximate the ground state $\ket{\psi_0}$ with ensemble size $\mathcal{O}\left(e^{3d\nu/2}\gamma^{-1}\epsilon^{-2}\right)$ and quantum circuits with depth $d = \mathcal{O}\left( \Delta^{-1}\log(1/\epsilon) \right)$. 
    The total query complexity is $\mathcal{O}\left( \Delta^{-1}\log(1/\epsilon)e^{\mathcal{O}(d\nu)}\gamma^{-1} \epsilon^{-2}  \right)$ for preparing the ground state. 
\end{result}

We also achieve similarly favorable scaling at this task. When the signal-to-noise ratio $1/\alpha = \Or(1)$, the total query complexity becomes $\mathcal{O}\left( \Delta^{-1}\log(1/\epsilon)\gamma^{-1} \epsilon^{-2}  \right)$, polynomial depending on system parameters (i.e., the spectral gap $\Delta$ of $H$ and the initial overlap $\gamma$) and precision $\epsilon$.  

\subsection{Generalizations of the Ensemble QSP}\label{sec:intro_additional_results}

In many applications, only some scalar observables are of interest instead of preparing a full quantum state. As the degree of freedom of observables is much smaller than that of quantum states in the Hilbert space, complex control operations in the LCU implementation are not necessary in a simpler case. In this subsection, we present an alternative classical ensemble average to replace the quantum average based on LCU and reduce the implementation cost. 

First consider the observable $\braket{\psi|p(A)^{\dagger}O p(A)|\psi}$ for a Hermitian matrix $O$ and a known quantum state $\ket{\psi}$. Here, $p(A)$ stands for the state preparation using QSP. For example, when estimating the density of states in the low-energy subspace, $p(A)$ is the projection onto the low-energy subspace. 

The algorithm is still based on using the average to approximate the desired quantity. 
However, we compute the average at the end on classical devices instead of performing the average coherently. 
The algorithm independently runs for multiple times the short-depth circuit in~\cref{fig:random_Hadamard}, which has exactly the same structure as the Hadamard test for non-unitary matrices proposed in~\cite{TongAnWiebe2021}, but the operator becomes inconsistent among different runs due to the noise. 
Then we classically average all the measurement outcomes, which is provably an accurate estimate of the desired observable. 
The cost of our algorithm is summarized in the following result. 

\begin{result}[Informal version of~\cref{thm:noisy_QSP_observable}]\label{res:QSP_observable}
    With only noisy access to Z rotations, we may estimate the observable $\braket{\psi|p(A)^{\dagger} O p(A)|\psi}$ for $d$-degree polynomial $p$ and Hermitian $O$ with at most $\epsilon$ error and constant probability. 
    The algorithm requires $\mathcal{O}( c^{-4d} \epsilon^{-2})$ independent runs of a circuit with depth $\mathcal{O}(d)$, where $c = \mathbb{E} \cos e < 1$ is the expectation of the cosine of the additive noise in the phase factors. 
\end{result}

Similar to~\cref{res:QSP_block_encoding}, our algorithm for estimating observables can also potentially achieve exponential speedup over classical counterparts in terms of dimension and scale quadratically in the inverse precision. 
The overall complexity is still exponential in the degree $d$. 
However, now such a computational overhead only appears to be a large number of repeats of a short-depth circuit. 
This circuit only has $\mathcal{O}(d)$ circuit depth, which is exactly the same as the noiseless QSP circuit. 
Therefore,~\cref{res:QSP_observable} shows that, if we are interested in obtaining the observables, then the noises in Z rotations can be corrected without any extra quantum resources, despite many more repeats of the same circuit. This algorithmic approach is also broadly applicable to a range of scientific computing tasks, with particular emphasis on accurately extracting observables in the presence of coherent system errors. We analyze the algorithmic scaling in detail for the same three representative applications: Hamiltonian simulation, the quantum linear system problem, and the ground state preparation problem in~\cref{thm:app_Hsim},\cref{thm:app_QLSP}, and \cref{thm:app_GSP}, respectively.

We remark that our algorithm for estimating observables from noisy QSP circuits can indeed work in the general scenario where the observable is random with Hermitian expectation. 
We believe this subroutine might be of independent interest and will discuss this separately below in~\cref{sec:intro_additional_results}. 

In designing our algorithms for estimating the observables of a noisy QSP circuit, we use a generalized version of the quantum Hadamard test. 
This generalized Hadamard test works for estimating any random observables beyond those obtained by QSP, so it might be of independent interest and will be separately discussed below. 

Our goal is to estimate the observable $\braket{\psi|\mathbb{E}\widetilde{O}|\psi}$ for a known quantum state $\ket{\psi}$ and a matrix $\widetilde{O}$. 
If $\widetilde{O}$ is deterministic and unitary, then we can use the Hadamard test together with the amplitude estimation to achieve Heisenberg limit. 
A generalization has been proposed in~\cite{TongAnWiebe2021}, enabling us to estimate the observable when $\widetilde{O}$ is deterministic but non-unitary. 
Both algorithms need to implement a Hadamard test circuit involving a single application of controlled $\widetilde{O}$ for many times. 

In our case, we consider the circuit of $\widetilde{O}$ with randomness. Such randomness may come from hardware errors, or the ideal observable operator is intrinsically random. 
In this case, we show that the standard Hadamard test still works -- we may run the Hadamard test circuit with random implementation of $\widetilde{O}$ as if it is deterministic, and estimate the observable using the outcomes of multiple runs in a standard way. 
The reason why the standard Hadamard test still works is that, although the probability distribution of the outcome conditioned on a specific sample of $\widetilde{O}$ varies among different runs of the Hadamard test circuit, the total probability distribution of the outcomes remains the same by the law of total probability. 
We describe the entire approach in~Algorithm~\ref{alg:random_Hadamard}, and show its validity and efficiency below. 

\begin{result}[Informal version of~\cref{thm:random_Hadamard}]
    Let $\widetilde{O}$ be a random matrix with Hermitian expectation. 
    Then we can estimate $\braket{\psi|\mathbb{E}\widetilde{O}|\psi}$ with error at most $\epsilon$ by running the Hadamard test circuit for $\mathcal{O}\left(\epsilon^{-2}\right)$ times. 
\end{result}

\section{Discussion}
In this work, we propose a new framework for quantum signal processing (QSP) that mitigates random coherent errors, enabling robust implementations of QSP circuits on noisy intermediate-scale quantum (NISQ) devices. By leveraging ensembles of randomized QSP circuits, we efficiently correct phase errors in Z rotations, which are typically recognized as the dominant noise source in NISQ systems and have been shown to be uncorrectable without using additional quantum resources~\cite{tan2023error}.

More specifically, we develop noise-resilient QSP algorithms for implementing polynomial functions of Hermitian matrices and for estimating observables, and apply these algorithms to a range of scientific computing tasks, including Hamiltonian simulation, the quantum linear systems problem, and ground-state preparation. We further analyze the trade-off between algorithmic error and implementation error, balancing between the algorithmic approximation accuracy and the accumulated hardware noise. Notably, the overhead required to mitigate systematic random misrotation errors scales favorably with the system size when the signal-to-noise ratio is constantly large. The algorithm requires only a polynomial number of repetitions of equal-depth QSP circuits compared to the noiseless case. This feature is particularly important given the stringent constraints on circuit depth and coherence time in current quantum hardware.

At the same time, our approach has several limitations that point to natural directions for future work. Our proposed algorithm suppresses the effects of $Z$-gate misrotations by averaging over independent and identically distributed (i.i.d.) noise realizations through classical post-processing. A possible alternative is to coherently manipulate multiple parallel noisy QSP circuits simultaneously~\cite{martyn2025parallelQSP}, enabling a space–time trade-off that can potentially reduce the repetition overhead of the algorithm. Moreover, the noise model considered in this work is restricted to additive phase noise in the Z-rotation gates. While this type of noise is indeed one of the dominant sources of quantum algorithm errors in current NISQ devices, extending our framework to address other leading sources of errors, such as crosstalk and amplitude damping errors~\cite{ash2020experimentalcrosstalk,google2020hartree}, is an important direction for future investigation. It is also of practical relevance to test our algorithms on real devices and examine whether we may already obtain any useful quantum advantage with available NISQ hardware, which would further clarify the regimes in which our noise-resilient QSP techniques provide tangible benefits.

\section*{Acknowledgements}

DA acknowledges support from the Department of Defense through the Hartree Postdoctoral Fellowship at QuICS, and the National Science Foundation through QLCI grants OMA-2120757. M. N. is
supported by the U.S. National Science Foundation grant CCF-2441912(CAREER), Air Force Office of Scientific Research under award number FA9550-25-1-0146, and   the U.S. Department of Energy,  Office of  Advanced Scientific Computing Research  under Award Number DE-SC0025430.

\bibliography{main}

\clearpage
\newpage
\appendix
\onecolumngrid

\begin{center}
    {\Large \bf Appendix}
\end{center}

The appendix of this paper is organized as follows. 
We first briefly review the concepts and propositions of block-encoding and QSP in~\cref{sec:prelim}. 
In~\cref{sec:noisy_QSP}, we present the formal definition of our noisy model and studies the expected outcome of the noisy QSP circuit. 
Based on that, we design and analyze the complexity of noise-resilient QSP algorithms for implementing polynomials of Hermitian matrices and estimating observables in~\cref{sec:algorithm}, and apply our algorithms to Hamiltonian simulation, linear system problems, and ground state preparation problems in~\cref{sec:applications}. 

\section{Preliminaries}\label{sec:prelim}

In this section, we briefly summarize preliminary results on block-encoding of a matrix and QSP. 
We refer to~\cite{GilyenSuLowEtAl2019} for comprehensive discussions. 

\subsection{Block-encoding}

Block-encoding is a powerful model for encoding a general matrix as an enlarged unitary operator. 
Intuitively, for a matrix $A$ with $\|A\| \leq 1$, its block-encoding is a unitary matrix $U_A$ in a higher-dimensional Hilbert space such that 
\begin{equation}
    U_A = \left( \begin{array}{cc}
        A & \cdot \\
        \cdot & \cdot
    \end{array} \right). 
\end{equation}
Its formal definition is given below. 
\begin{defn}[Block-encoding]
    Let $A$ be a matrix in $2^n\times 2^n$ dimension. Then a $2^{n+a}\times 2^{n+a}$ dimensional unitary matrix $U_A$ is called the block-encoding of $A$, if 
 $A = \bra{0}^{\otimes a} U_A \ket{0}^{\otimes a}$.  
\end{defn}

With the block-encoding model, we may perform basic linear algebra operations for general matrices on quantum computers. 
The following results from~\cite{GilyenSuLowEtAl2019} show how we may add and multiply several matrices. 
The matrix addition technique is indeed a generalized version of the quantum LCU subroutine. 

\begin{lem}
    Let $A = \sum_{j=0}^{m-1} y_j A_j$ and $\|y\|_1 \leq \beta$. 
    Suppose that $(P_L,P_R)$ is a pair of unitaries such that $P_L \ket{0} = \sum_{j=0}^{m-1} c_j\ket{j}$ and $P_R \ket{0} = \sum_{j=0}^{m-1} d_j\ket{j}$ with $y_j = \beta c_j^* d_j$ for all $j$, and $W = \sum_{j=0}^{m-1} \ket{j}\bra{j} \otimes U_j$ where $U_j$ is a block-encoding of $A_j$. 
    Then we can construct a block-encoding of $\frac{1}{\beta}A$ using each of $W$, $P_L^{\dagger}$ and $P_R$ once. 
\end{lem}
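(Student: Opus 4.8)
The plan is to exhibit the desired block-encoding explicitly as the LCU circuit
$V := (P_L^{\dagger}\otimes I)\, W\, (P_R\otimes I)$,
where the first register is the $\lceil \log_2 m\rceil$-qubit selection register on which $P_L,P_R$ act, the remaining factor $I$ is the identity on the $a$ block-encoding ancillas (shared by all the $U_j$) together with the system, and $W=\sum_{j=0}^{m-1}\ket{j}\bra{j}\otimes U_j$ is the select operator. Since $P_L,P_R$ are unitary and $W$ is block-diagonal with unitary blocks (hence unitary), $V$ is unitary, so it is a legitimate candidate block-encoding with $\lceil\log_2 m\rceil+a$ ancilla qubits, using $P_R$, $W$, and $P_L^{\dagger}$ each exactly once.

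Next I would verify the defining identity by computing the top-left block of $V$. Acting on $\ket{0}\ket{0}^{\otimes a}\ket{\psi}$ and reading right to left: $P_R$ prepares $\sum_j d_j\ket{j}\ket{0}^{\otimes a}\ket{\psi}$; then $W$ gives $\sum_j d_j\ket{j}\,U_j\big(\ket{0}^{\otimes a}\ket{\psi}\big)$. Because the selection register commutes with the projection on the block-encoding ancillas and $U_j$ is a block-encoding of $A_j$, i.e. $\bra{0}^{\otimes a}U_j\ket{0}^{\otimes a}=A_j$, projecting the ancillas back onto $\ket{0}^{\otimes a}$ leaves $\sum_j d_j\ket{j}\,A_j\ket{\psi}$. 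Finally $\bra{0}P_L^{\dagger}=(P_L\ket{0})^{\dagger}=\sum_j c_j^{*}\bra{j}$, so the full projection yields $\big(\bra{0}\bra{0}^{\otimes a}\otimes I\big)V\big(\ket{0}\ket{0}^{\otimes a}\otimes I\big)=\sum_{j} c_j^{*} d_j\, A_j$.

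It then only remains to substitute the hypothesis $y_j=\beta c_j^{*}d_j$, which turns the last expression into $\tfrac{1}{\beta}\sum_j y_j A_j=\tfrac{1}{\beta}A$, establishing that $V$ is a block-encoding of $\tfrac1\beta A$; the condition $\|y\|_1\le\beta$ is exactly the Cauchy–Schwarz feasibility requirement ensuring that unit vectors $P_L\ket0,P_R\ket0$ with $y_j=\beta c_j^{*}d_j$ can exist, and is not otherwise needed in the verification. I do not expect a genuine obstacle here; the only points requiring care are the bookkeeping of the three tensor factors and the assumption that all $U_j$ share the same $a$ ancilla qubits, together with the harmless padding of the selection register (and of the coefficient vectors $c,d$ with zeros) when $m$ is not a power of two, which leaves the computation above unchanged.
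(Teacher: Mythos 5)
Your proof is correct. The paper itself does not prove this lemma---it imports it verbatim from Gily\'en et al.\ (their Lemma 52/53 on state-preparation-pair LCU)---and your argument, computing the top-left block of $(P_L^{\dagger}\otimes I)\,W\,(P_R\otimes I)$ and substituting $y_j=\beta c_j^{*}d_j$, is exactly the standard derivation given there, including the correct observation that $\|y\|_1\le\beta$ is only a Cauchy--Schwarz feasibility condition on $(P_L,P_R)$ rather than an ingredient of the verification.
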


\begin{lem}
    If $U_A$ and $U_B$ are block-encodings of $A$ and $B$ with $n_A$ and $n_B$ many ancilla qubits, respectively, then $(I_{n_B}\otimes U_A)(I_{n_A}\otimes U_B)$ is a block-encoding of $AB$. 
\end{lem}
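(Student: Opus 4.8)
The plan is to unfold both block-encoding definitions and read off the relevant block of the product by a direct computation; the only real care needed is in tracking tensor factors. I would fix the global Hilbert space as $\mc{H}_{n_A}\otimes\mc{H}_{n_B}\otimes\mc{H}_{\mathrm{sys}}$, where $\mc{H}_{n_A}$ and $\mc{H}_{n_B}$ carry the ancilla qubits of $U_A$ and $U_B$ respectively and $\mc{H}_{\mathrm{sys}}$ is the common $n$-qubit system register. In this convention $I_{n_A}\otimes U_B$ means ``identity on $\mc{H}_{n_A}$, $U_B$ on $\mc{H}_{n_B}\otimes\mc{H}_{\mathrm{sys}}$'' and $I_{n_B}\otimes U_A$ means ``identity on $\mc{H}_{n_B}$, $U_A$ on $\mc{H}_{n_A}\otimes\mc{H}_{\mathrm{sys}}$''. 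I would note up front that writing the latter as a plain tensor product is a mild abuse, since the ancilla register of $U_A$ and the system register are not adjacent in the chosen ordering; this is harmless, as reordering registers amounts to conjugating by SWAPs, which does not change the block structure, and the product of the (rearranged) unitaries is again unitary, now with $n_A+n_B$ ancilla qubits.

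Next I would expand each unitary in block form over its own ancilla register: write $U_A=\sum_{i,j}\ket{i}\bra{j}\otimes A_{ij}$ with $A_{00}=A$, and similarly $U_B=\sum_{k,l}\ket{k}\bra{l}\otimes B_{kl}$ with $B_{00}=B$, the blocks $A_{ij},B_{kl}$ acting on $\mc{H}_{\mathrm{sys}}$. Embedding into the global space gives $I_{n_B}\otimes U_A=\sum_{i,j}\ket{i}\bra{j}\otimes I_{n_B}\otimes A_{ij}$ and $I_{n_A}\otimes U_B=\sum_{k,l}I_{n_A}\otimes\ket{k}\bra{l}\otimes B_{kl}$, with first slot $\mc{H}_{n_A}$, second $\mc{H}_{n_B}$, third $\mc{H}_{\mathrm{sys}}$. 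Multiplying, the $\mc{H}_{n_A}$- and $\mc{H}_{n_B}$-projectors act on disjoint registers, so
\[
(I_{n_B}\otimes U_A)(I_{n_A}\otimes U_B)=\sum_{i,j,k,l}\ket{i}\bra{j}\otimes\ket{k}\bra{l}\otimes A_{ij}B_{kl}.
\]

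It then remains to project both ancilla registers onto $\ket{0}$. Applying $\bra{0}^{\otimes n_A}\bra{0}^{\otimes n_B}$ on the left and $\ket{0}^{\otimes n_A}\ket{0}^{\otimes n_B}$ on the right of the displayed expression selects exactly the $i=j=k=l=0$ term, leaving $A_{00}B_{00}=AB$ on $\mc{H}_{\mathrm{sys}}$; hence $(I_{n_B}\otimes U_A)(I_{n_A}\otimes U_B)$ is a block-encoding of $AB$ with $n_A+n_B$ ancilla qubits, which is the claim. If one prefers to avoid the block expansion, the same identity follows by pushing $\ket{0}^{\otimes n_A},\bra{0}^{\otimes n_A}$ through $I_{n_A}\otimes U_B$ — on which they act trivially — to obtain $(\bra{0}^{\otimes n_A}\otimes I)(I_{n_B}\otimes U_A)(\ket{0}^{\otimes n_A}\otimes I)=I_{n_B}\otimes A$, and likewise $(\bra{0}^{\otimes n_B}\otimes I)(I_{n_A}\otimes U_B)(\ket{0}^{\otimes n_B}\otimes I)=I_{n_A}\otimes B$, then composing. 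I do not expect a genuine obstacle: the whole content is the tensor-factor bookkeeping above, and the only point worth stating carefully is the convention that makes $I_{n_B}\otimes U_A$ and $I_{n_A}\otimes U_B$ act on the intended registers.
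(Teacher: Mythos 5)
Your proof is correct. Note that the paper itself gives no proof of this lemma --- it is stated as an imported result from Gily\'en et al.\ (their product-of-block-encodings lemma) --- so there is nothing to diverge from; your argument, in either of the two forms you give (the block expansion $U_A=\sum_{i,j}\ket{i}\bra{j}\otimes A_{ij}$ with projection onto the $i=j=k=l=0$ term, or pushing $\ket{0}^{\otimes n_A}$ and $\ket{0}^{\otimes n_B}$ through the factor on which each acts trivially), is exactly the standard one, and your explicit handling of the register-ordering abuse in $I_{n_B}\otimes U_A$ is the only point that genuinely needed care and is handled correctly.
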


\subsection{Quantum signal processing and qubitization}

Let $p(x)$ be a real polynomial such that $|p(x)| \leq 1$ for $x \in [-1,1]$ and $\deg(p) = d$. 
QSP gives a unitary representation of this polynomial.  
Here we use the reflection version which is more suitable for block-encoding. 
\begin{lem}[{\cite[Corollary 10]{GilyenSuLowEtAl2019}}]\label{lem:QSP}
    Let $p(x)$ be a given real polynomial such that 
    \begin{enumerate}
        \item $|p(x)| \leq 1$ for all $x \in [-1,1]$, 
        \item $\deg(p) = d$, and 
        \item $p(x)$ has parity $d\mod 2$. 
    \end{enumerate}
    Then, there exists a complex polynomial $P(x)$ with $\Re(P(x)) = p(x)$ and a set of phase factors $\Phi = (\phi_1,\cdots,\phi_d)$, such that 
    \begin{equation}\label{eqn:QSP_sequence}
         e^{i \phi_d Z} R(x) e^{i \phi_{d-1} Z} R(x) \cdots e^{i \phi_1 Z} R(x) = \left( \begin{array}{cc}
            P(x) & \cdot \\
            \cdot & \cdot
        \end{array} \right), 
    \end{equation}
    where 
    \begin{equation}
        R(x) = \left(\begin{array}{cc}
            x & \sqrt{1-x^2} \\
            \sqrt{1-x^2} & -x
        \end{array}\right). 
    \end{equation}
\end{lem}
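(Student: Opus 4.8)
The goal is to prove that for any real polynomial $p(x)$ of degree $d$ with the stated parity and boundedness conditions, there exist phase factors $\Phi = (\phi_1,\dots,\phi_d)$ and a complex polynomial $P(x)$ with $\Re(P(x)) = p(x)$ realizing the product formula in \eqref{eqn:QSP_sequence}. The plan is to reduce this reflection-based statement to the standard rotation-based QSP theorem, which I would take as the known starting point (this is \cite[Theorem 3]{GilyenSuLowEtAl2019}: given $p$ real with $\deg p = d$, parity $d \bmod 2$, and $|p| \le 1$ on $[-1,1]$, there exist $\theta_1,\dots,\theta_{d+1}$ such that $\prod_j e^{i\theta_j X}\, e^{i \arccos(x) Z}$ — or the analogous $W(x)$-based sequence — has top-left entry a complex polynomial $P$ with $\Re P = p$). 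Everything then comes down to a change of basis and a careful bookkeeping of how the number of signal operators changes.

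First I would record the algebraic identity relating the reflection operator $R(x)$ to the conventional signal rotation. Writing $W(x) = \begin{pmatrix} x & i\sqrt{1-x^2} \\ i\sqrt{1-x^2} & x \end{pmatrix} = e^{i\arccos(x) X}$, one checks directly that $R(x) = e^{i\pi/4\, ?}\cdots$ — more precisely, $R(x)$ is conjugate to $W(x)$ (up to a fixed single-qubit rotation and possibly a phase) because both are reflections/rotations in the same plane; the key facts are that $R(x)$ is Hermitian with eigenvalues $\pm 1$ and $R(x) = H Z_\phi \cdots$. Concretely I would verify $R(x) = e^{i\phi_0 Z}\, W(x)\, e^{-i\phi_0 Z}\cdot(\text{constant})$ or, most cleanly, use the standard fact that $R(x) = (e^{i\frac{\pi}{4} Z} \, e^{-i\frac{\pi}{4} X})\, W(x)\, (e^{i\frac{\pi}{4}X}\, e^{-i\frac{\pi}{4}Z})$ type conjugation, so that a product of the form $\prod_j e^{i\phi_j Z} R(x)$ can be rewritten, by absorbing the conjugating rotations into neighboring $Z$-rotations and telescoping, as a conjugate of a standard QSP sequence $\prod_j e^{i\theta_j Z} W(x)$. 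Because conjugation by a fixed unitary, together with the telescoping of boundary rotations, only alters the off-diagonal entries and multiplies the top-left entry by a fixed phase, the real part of the top-left entry is preserved (after possibly replacing $P$ by $e^{i c} P$, which I would absorb by noting we are free in the choice of $P$ among those with $\Re P = p$, or by adjusting $\phi_d$ and $\phi_1$).

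The main steps, in order: (1) establish the conjugation identity between $R(x)$ and $W(x)$, isolating the fixed single-qubit rotations $V$ with $R(x) = V\, W(x)\, V^{-1}$ (or a near-variant with an extra global-like phase on the relevant block); (2) invoke the standard QSP theorem to obtain $\theta_1,\dots,\theta_{d+1}$ with $\prod_{j} e^{i\theta_j Z}\, W(x)$ having top-left entry $P_0$, $\Re P_0 = p$; (3) substitute $W(x) = V^{-1} R(x) V$ into that product and telescope the $V$'s through the $Z$-rotations — here I use that $V$ is a $Z$- or $X$-rotation so $V e^{i\theta Z} V^{-1}$ is still of QSP-compatible form, or more simply absorb $V, V^{-1}$ at the two ends and combine adjacent rotations — to land on a product $\prod_{j=1}^{d} e^{i\phi_j Z} R(x)$ (note the count drops from $d+1$ to $d$ signal operators because the reflection convention "uses up" one rotation at the boundary); (4) track the effect on the matrix entry to confirm $\Re$ of the new top-left entry is still $p$, defining $P(x)$ accordingly; (5) check the parity bookkeeping is consistent ($p$ has parity $d \bmod 2$ matches $d$ copies of $R$).

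The hard part will be step (3): getting the count of signal operators and phase factors exactly right under the change of convention, and making sure the boundary rotations that don't get absorbed don't spoil either the polynomial degree or the "$\Re P = p$" property. In particular one must be careful that conjugating $W$ by $V$ and then collecting terms genuinely produces a clean alternating sequence $e^{i\phi_d Z} R(x) \cdots e^{i\phi_1 Z} R(x)$ with no leftover non-$Z$ rotation, and that the possible global phase picked up is real-part–neutral for $p$ — equivalently, that we can always choose the free imaginary companion of $p$ to absorb it. I would handle this either by a direct $2\times 2$ matrix computation pinning down all four entries symbolically, or by citing the known equivalence of QSP conventions (e.g. the "reflection" vs "rotation" discussion in \cite{GilyenSuLowEtAl2019,martyn2021grand}) and then verifying only the entry-level $\Re$-preservation, which is the one line that actually matters for the statement.
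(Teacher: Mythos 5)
The paper does not prove this lemma at all: it is imported verbatim as \cite[Corollary 10]{GilyenSuLowEtAl2019}, so there is no in-paper argument to compare against. Your strategy --- reduce the reflection convention to the rotation-based QSP theorem via an algebraic identity relating $R(x)$ to $W(x)=e^{i\arccos(x)X}$, telescope the fixed rotations, and track the effect on the top-left entry --- is exactly how the cited reference derives its reflection corollary from its rotation-based theorems, so the route is the right one.

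However, the specific identity you lean on in steps (1) and (3) is wrong as stated: $R(x)$ cannot be conjugate to $W(x)$ by any fixed $V$, since $\det R(x) = -1$ while $\det W(x) = +1$ and conjugation preserves determinants. The correct relation is a \emph{two-sided} multiplication together with a global phase,
\begin{equation*}
R(x) \;=\; -i\, e^{i\frac{\pi}{4}Z}\, W(x)\, e^{i\frac{\pi}{4}Z},
\end{equation*}
which one verifies by direct computation. Substituting this into $\prod_j e^{i\phi_j Z}R(x)$ telescopes cleanly into a rotation-convention sequence with $d$ signal operators and $d+1$ $Z$-phases (interior phases shifted by $\pi/2$, boundary ones by $\pi/4$), times the global factor $(-i)^d$. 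This is also where your step (4) needs more than a remark: multiplying the corner entry by the unimodular constant $(-i)^d$ does \emph{not} preserve $\Re P = p$; for $d$ even it flips at most a sign, but for $d$ odd it exchanges real and imaginary parts, so you must invoke the version of the rotation-based existence theorem that realizes a prescribed \emph{imaginary} part (equivalently, insert a constant $\pi/2$ shift into one boundary phase), rather than ``absorbing the phase into the free imaginary companion of $p$.'' With the corrected identity and that case split, your outline closes; the parity and operator-count bookkeeping in step (5) is then automatic.
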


\begin{figure}
\includegraphics[width=\linewidth]{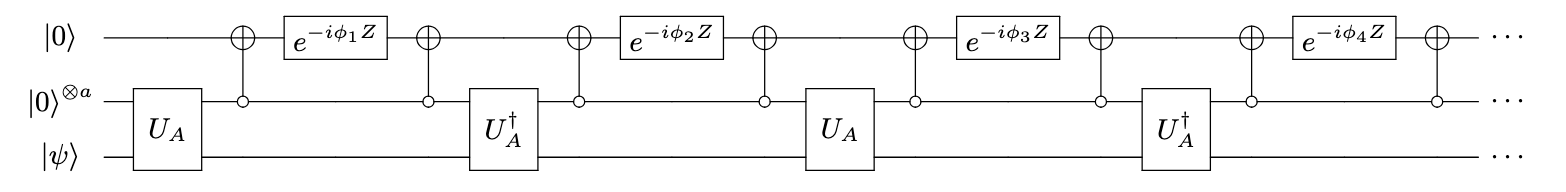}
    \caption{ Quantum circuit for block-encoding $P(A)$ using QSP and qubitization. }
    \label{fig:QSP}
\end{figure}

Let $A$ be a Hermitian matrix with $\|A\|\leq 1$, and we are given its block-encoding denoted by $U_A$. 
We are interested in implementing the matrix function $p(A)$. 
Notice that it suffices to focus on block-encoding $P(A)$ where $P(x)$ is the corresponding complex polynomial in~\cref{lem:QSP}, then a block-encoding of $p(A)$ can be obtained by using LCU to compute $\frac{1}{2}(P(A)+P(A)^{\dagger})$. 
For the block-encoding of $P(A)$, it can be constructed using the qubitization technique together with the QSP representation, by alternatively applying $U_A$ (or $U_A^{\dagger}$) and controlled rotations. 
The circuit is given in~\cref{fig:QSP}, where $\phi_j$'s are the phase factors in~\cref{lem:QSP}, and we formally state its validity as follows.

\begin{lem}[{\cite[Theorem 17]{GilyenSuLowEtAl2019}}] \label{lem:qubitization}
    Let $A$ be a Hermitian matrix with $\|A\|\leq 1$, and $U_A$ be its block-encoding. 
    Let $P(x)$ be a $d$-degree complex polynomial and $\Phi=(\phi_1,\cdots,\phi_d)$ be the corresponding phase factors, as given in~\cref{lem:QSP}. 
    Then, the quantum circuit in~\cref{fig:QSP} gives a block-encoding of $P(A)$. 
\end{lem}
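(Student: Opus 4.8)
The statement is precisely the qubitization theorem of~\cite{GilyenSuLowEtAl2019}, so the plan is to reproduce the standard invariant-subspace argument. The starting point is the spectral decomposition $A = \sum_\lambda \lambda \ket{\lambda}\bra{\lambda}$, which exists because $A$ is Hermitian, with each eigenvalue in $[-1,1]$ since $\|A\|\le 1$. Writing $\ket{\widetilde\lambda} := \ket{0}^{\otimes a}\ket{\lambda}$ and using $\bra{0}^{\otimes a} U_A \ket{0}^{\otimes a} = A$, one gets $\bra{\widetilde\mu}U_A\ket{\widetilde\lambda} = \lambda\,\delta_{\mu\lambda}$, hence $U_A\ket{\widetilde\lambda} = \lambda\ket{\widetilde\lambda} + \sqrt{1-\lambda^2}\,\ket{\widetilde\lambda^\perp}$ for some unit vector $\ket{\widetilde\lambda^\perp}$ orthogonal to every $\ket{\widetilde\mu}$. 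Unitarity of $U_A$ then forces the two-dimensional subspaces $\mc H_\lambda := \mathrm{span}\{\ket{\widetilde\lambda},\ket{\widetilde\lambda^\perp}\}$ to be mutually orthogonal and, crucially, invariant under $U_A$ (hence under $U_A^\dagger$): one verifies $U_A\ket{\widetilde\lambda^\perp}\in\mc H_\lambda$ by pairing against $\ket{\widetilde\mu}$ and $\ket{\widetilde\mu^\perp}$ and invoking $\bra{\widetilde\mu}U_A^\dagger U_A\ket{\widetilde\lambda^\perp}=\delta_{\mu\lambda}\cdot(\dots)$ together with the explicit form of $U_A\ket{\widetilde\mu}$. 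The degenerate cases $\lambda=\pm 1$ give one-dimensional invariant spaces, to be handled separately using the parity hypothesis on $P$.

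Second, I would restrict the circuit of \cref{fig:QSP} to a fixed $\mc H_\lambda$. In the ordered basis $(\ket{\widetilde\lambda},\ket{\widetilde\lambda^\perp})$ the block-encoding acts as a $2\times 2$ unitary with top-left entry $\lambda$; by unitarity this matrix equals $R(\lambda)$ up to a diagonal $Z$-phase (a gauge), which can be absorbed into the adjacent phase-factor rotations, so without loss of generality $U_A|_{\mc H_\lambda} = R(\lambda)$. The controlled phase gates $e^{\I\phi_j Z}$ are controlled on the ancilla flag that distinguishes the $\ket{0}^{\otimes a}$-block from its complement, so within $\mc H_\lambda$ they act as $\diag(e^{\I\phi_j},e^{-\I\phi_j}) = e^{\I\phi_j Z}$ in that same basis. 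Consequently the operator-level alternating sequence of \cref{fig:QSP}, restricted to $\mc H_\lambda$, is literally the scalar QSP sequence $e^{\I\phi_d Z}R(\lambda)\cdots e^{\I\phi_1 Z}R(\lambda)$ of~\eqref{eqn:QSP_sequence}.

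Third, I would apply \cref{lem:QSP}: for the phase factors $\Phi$ associated with $P$, the $(0,0)$-entry of that scalar sequence is $P(\lambda)$. Reassembling over the orthogonal decomposition $\bigoplus_\lambda \mc H_\lambda$ — and noting that inside the $\ket{0}^{\otimes a}$-block there is nothing orthogonal to $\bigoplus_\lambda\mathrm{span}\{\ket{\widetilde\lambda}\}$, since $\{\ket{\lambda}\}$ is a complete basis of the system register — one concludes $\bra{0}^{\otimes a}\,U_{\mathrm{circ}}\,\ket{0}^{\otimes a} = \sum_\lambda P(\lambda)\ket{\lambda}\bra{\lambda} = P(A)$, i.e.\ the circuit is a block-encoding of $P(A)$.

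The main obstacle is the invariant-subspace step: one must check carefully that each $\mc H_\lambda$ is preserved by $U_A$ (not merely that $\ket{\widetilde\lambda}$ maps into it), that distinct $\mc H_\lambda$ are mutually orthogonal, and that in the chosen basis $U_A$ genuinely reduces to $R(\lambda)$ — the phase/gauge bookkeeping and the boundary eigenvalues $\lambda=\pm 1$ are where all the care is needed. Once this qubitization structure is established, the reduction to \cref{lem:QSP} and the final reassembly are immediate.
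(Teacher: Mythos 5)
The paper itself offers no proof of \cref{lem:qubitization}; it is imported verbatim as \cite[Theorem 17]{GilyenSuLowEtAl2019}, so your attempt can only be judged against the standard argument it is meant to reconstruct. Judged that way, it has a genuine gap exactly at the step you yourself flag as ``the main obstacle'': for a \emph{general} unitary block-encoding $U_A$, the two-dimensional subspaces $\mc{H}_\lambda = \mathrm{span}\{\ket{\widetilde\lambda},\ket{\widetilde\lambda^\perp}\}$ are \emph{not} invariant under $U_A$, and unitarity does not force them to be. Concretely, take $a=2$ ancillas block-encoding a scalar $\lambda$, with $U\ket{00}=\lambda\ket{00}+\sqrt{1-\lambda^2}\ket{01}$ and $U\ket{01}=\ket{10}$ (completed to a unitary); then $U\ket{01}\notin\mc{H}_\lambda$. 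The pairing computation you sketch, $\bra{\widetilde\mu}U_A\ket{\widetilde\lambda^\perp}=\sqrt{1-\mu^2}\,\langle\widetilde\mu^{\perp\prime}|\widetilde\lambda^\perp\rangle$, involves the vectors $\ket{\widetilde\mu^{\perp\prime}}$ generated by $U_A^\dagger$, which in general differ from the $\ket{\widetilde\mu^{\perp}}$ generated by $U_A$, so it does not close. The gauge step fails for the same reason: the invariant $bc$ (product of the off-diagonal entries of the restricted $2\times2$ matrix) need not equal $1-\lambda^2$, e.g.\ for the unitary with entries $\lambda$ and $i\sqrt{1-\lambda^2}$, so $U_A|_{\mc{H}_\lambda}$ is not $R(\lambda)$ up to a diagonal rephasing. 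Your argument is valid only in the special case where $U_A$ is a Hermitian involution (a reflection-type block-encoding), for which $U_A^2=I$ forces $U_A\ket{\widetilde\lambda^\perp}=\sqrt{1-\lambda^2}\ket{\widetilde\lambda}-\lambda\ket{\widetilde\lambda^\perp}$ and hence $U_A|_{\mc{H}_\lambda}=R(\lambda)$ exactly.

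The correct route, and the one Theorem 17 of \cite{GilyenSuLowEtAl2019} actually takes (consistent with the paper's description of \cref{fig:QSP} as ``alternatively applying $U_A$ (or $U_A^{\dagger}$) and controlled rotations''), is two-sided: one uses the cosine--sine (singular-vector) decomposition to produce \emph{two} families of two-dimensional subspaces, a ``right'' family containing $\ket{\widetilde\lambda}$ and a ``left'' family, such that $U_A$ maps right to left and $U_A^{\dagger}$ maps left back to right, each acting as $R(\lambda)$ in the appropriate bases, while the projector-controlled phase gates act as $e^{\I\phi_j Z}$ within each family. The alternating $U_A,U_A^{\dagger}$ circuit then reduces on each such pair to the scalar sequence of \cref{lem:QSP}, and definite parity of $P$ is what lets the eigenvalue (rather than singular-value) transformation $P(A)$ emerge. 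Alternatively, one can first build the Low--Chuang qubiterate from controlled-$U_A$ and controlled-$U_A^{\dagger}$ with one extra ancilla; that operator \emph{is} a product of two reflections and genuinely has the invariant-subspace structure you assumed. Either repair replaces your first step; the remainder of your argument (restriction, application of \cref{lem:QSP}, reassembly over the orthogonal decomposition) is then sound.
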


\section{Noisy QSP}\label{sec:noisy_QSP}

\subsection{Error model}\label{sec:error_model}
 
Now let us suppose that the phase factors are shifted due to the imperfect implementation of the $Z$ gates. 
Specifically, suppose that the real implementation of QSP is the circuit in~\cref{fig:QSP} but with a different set of phase factors $\widetilde{\Phi} = (\widetilde{\phi}_1,\cdots,\widetilde{\phi}_d)$, yielding the corresponding complex polynomial $\widetilde{P}(x)$. 
We would like to explore the relation between the ideal $P(x)$ and the actual $\widetilde{P}(x)$, and investigate potential ways of correcting the errors. 

For the errors, let us assume that 
\begin{equation}
    \widetilde{\phi}_j = \phi_j + e_j. 
\end{equation}
Here $e_j$'s are independent random variables such that the probability distribution function of $e_j$ is an even function. 
Examples include $0$-mean Gaussian distribution. 
Notice that we do not need all $e_j$'s to be identically distributed. This error model is broad enough to describe both coherent gate  error~\cite{lucero2010reduced} and the dominant Gaussian low-frequency error in superconducting qubit frequency~\cite{wudarski2023characterizing}. With the proper use pulse sequence, the bias and distribution of such Z phase error can be tailored to an unbiased Gaussian distribution with a properly designed pulse sequence~\cite{degen2017quantum,choi2020robust}. For example, bias of the phase error can be reduced to arbitrarily to zero using CPMG-like sequence\cite{meiboom1958modified}; and the correlated low frequency noise can be randomized into white-noise through sample re-ordering or more sophisticated quantum control optimization~\cite{soare2014experimental}. Consequently, dominant Z phase error can be described by a zero bias Gaussian distributed additive error for realistic quantum gate engineering in superconducting qubits~\cite{niu2019universal}.

\subsection{Expectation of the noisy circuit}

A key observation of the noisy QSP is that the expectation of the noisy circuit still contains the information of the ideal $P(A)$ with an extra rescaling factor. 
We formally state and prove it in the next theorem. 

\begin{thm}\label{thm:QSP_noise_phase_shift}
    Let $A$ be a Hermitian matrix with $\|A\| \leq 1$, and we are given its block-encoding denoted by $U_A$. 
    Let $P(x)$ be a $d$-degree complex polynomial and $\Phi = (\phi_1,\cdots,\phi_d)$ be its QSP phase factors. 
    Suppose that the actual implementation of QSP is a circuit $\widetilde{U}$ with noisy $Z$-rotation, in the sense that the ideal rotation $e^{-i \phi_j Z }$ is actually implemented as $e^{-i \widetilde{\phi}_j Z }$ where 
    \begin{equation}
        \widetilde{\phi}_j = \phi_j + e_j, 
    \end{equation}
    and $e_j$'s are independent random variables with even probability distribution functions. 
    Then, 
    \begin{equation}
        \mathbb{E} \widetilde{U} = \left( \begin{array}{cc}
            \left(\prod_{j=1}^d \mathbb{E} \cos e_j\right) P(A) & \cdot \\
            \cdot & \cdot
        \end{array} \right). 
    \end{equation}
\end{thm}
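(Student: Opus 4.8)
The plan is to track the QSP circuit entry-by-entry in each two-dimensional invariant subspace and exploit the multilinearity of the product over the independent phase errors. Recall from \cref{lem:qubitization} that the circuit in \cref{fig:QSP} acts, within each qubitized two-dimensional invariant subspace associated with an eigenvalue $x$ of $A$, exactly as the scalar QSP sequence in \eqref{eqn:QSP_sequence}. So it suffices to prove the scalar statement: if $\widetilde U(x) = \prod_{j=1}^d e^{i\widetilde\phi_j Z} R(x)$ with $\widetilde\phi_j = \phi_j + e_j$, then the top-left entry of $\mathbb{E}\,\widetilde U(x)$ equals $\bigl(\prod_j \mathbb{E}\cos e_j\bigr) P(x)$, where $P(x)$ is the top-left entry of the noiseless sequence. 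The theorem then follows by reassembling the subspaces, since the rescaling factor $\prod_j \mathbb{E}\cos e_j$ is a scalar independent of $x$.

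First I would write $e^{i\widetilde\phi_j Z} = e^{ie_j Z} e^{i\phi_j Z} = (\cos e_j\, I + i\sin e_j\, Z)\, e^{i\phi_j Z}$, using that $Z^2 = I$. Substituting this into the product and expanding gives
\begin{equation}
\widetilde U(x) = \sum_{S \subseteq [d]} \Bigl(\prod_{j\in S} i\sin e_j\Bigr)\Bigl(\prod_{j\notin S}\cos e_j\Bigr)\, V_S(x),
\end{equation}
where $V_S(x)$ is the fixed (non-random) operator obtained from the noiseless QSP sequence by inserting a factor $Z$ immediately after $e^{i\phi_j Z}$ for each $j\in S$. Since the $e_j$ are independent, taking expectations factorizes the coefficient of each $V_S(x)$:
\begin{equation}
\mathbb{E}\,\widetilde U(x) = \sum_{S\subseteq[d]} \Bigl(\prod_{j\in S} i\,\mathbb{E}\sin e_j\Bigr)\Bigl(\prod_{j\notin S}\mathbb{E}\cos e_j\Bigr)\, V_S(x).
\end{equation}
Now the hypothesis that each $e_j$ has an even probability density forces $\mathbb{E}\sin e_j = 0$, so every term with $S\neq\varnothing$ vanishes, leaving only $S = \varnothing$:
\begin{equation}
\mathbb{E}\,\widetilde U(x) = \Bigl(\prod_{j=1}^d \mathbb{E}\cos e_j\Bigr)\, V_\varnothing(x) = \Bigl(\prod_{j=1}^d \mathbb{E}\cos e_j\Bigr)\, U(x),
\end{equation}
where $U(x)$ is exactly the noiseless QSP sequence from \eqref{eqn:QSP_sequence}, whose top-left entry is $P(x)$.

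The only place requiring care — and what I expect to be the main obstacle in writing this cleanly — is making the "insert a $Z$ after each $e^{i\phi_j Z}$" bookkeeping rigorous: one must be explicit about the ordering of the factors in the non-commutative product, confirm that pulling $e^{ie_j Z}$ to sit just left of $e^{i\phi_j Z}$ introduces no reordering of the other factors, and verify that the $V_S$ are genuinely deterministic (they involve only $x$, the fixed $\phi_j$, and the inserted $Z$'s, not the $e_j$). Once the expansion is set up correctly, the vanishing of the odd-$S$ terms is immediate from $\mathbb{E}\sin e_j = 0$, and lifting from the scalar identity to the matrix statement is just an invocation of the qubitization structure in \cref{lem:qubitization}. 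No subtlety arises from interchanging expectation and the finite sum, since $d$ is finite and all entries are bounded.
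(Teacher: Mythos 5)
Your proposal is correct and follows essentially the same route as the paper's proof: reduce to the scalar QSP sequence via the qubitization lemma, use independence of the $e_j$ to factorize the expectation over the product, and use evenness of each distribution to kill the $\mathbb{E}\sin e_j$ contributions. The only cosmetic difference is that you expand the full product into $2^d$ subset terms before taking expectations, whereas the paper pushes the expectation directly into each single-gate factor and shows $\mathbb{E}\bigl(e^{i\widetilde{\phi}_j Z}\bigr) = \bigl(\mathbb{E}\cos e_j\bigr)e^{i\phi_j Z}$, which sidesteps the bookkeeping you flag as the delicate step.
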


\begin{proof}
    Let $\widetilde{P}(x)$ be a complex polynomial associated with the random phase factors $\widetilde{\Phi} = (\widetilde{\phi}_1,\cdots,\widetilde{\phi}_d)$. 
    We first study the relation between $\widetilde{P}(x)$ and $P(x)$. 
    According to~\cref{lem:QSP}, we have 
    \begin{equation}\label{eqn:proof_expectation_eq1}
        e^{i \phi_d Z} R(x) e^{i \phi_{d-1} Z} R(x) \cdots e^{i \phi_1 Z} R(x) = \left( \begin{array}{cc}
            P(x) & \cdot \\
            \cdot & \cdot
        \end{array} \right)
    \end{equation}
    and 
    \begin{equation}\label{eqn:proof_expectation_eq2}
        e^{i \widetilde{\phi}_d Z} R(x) e^{i \widetilde{\phi}_{d-1} Z} R(x) \cdots e^{i \widetilde{\phi}_1 Z} R(x) = \left( \begin{array}{cc}
            \widetilde{P}(x) & \cdot \\
            \cdot & \cdot
        \end{array} \right). 
    \end{equation}
    Take the expectation of~\cref{eqn:proof_expectation_eq2} and notice that different random phase factors $\widetilde{\phi}_j$'s are independent, then we have 
    \begin{equation}\label{eqn:proof_expectation_eq3}
        \left( \begin{array}{cc}
            \mathbb{E}\widetilde{P}(x) & \cdot \\
            \cdot & \cdot
        \end{array} \right) = \mathbb{E} \left(e^{i \widetilde{\phi}_d Z}\right) R(x) \mathbb{E} \left( e^{i \widetilde{\phi}_{d-1} Z} \right) R(x) \cdots \mathbb{E} \left(e^{i \widetilde{\phi}_1 Z}\right) R(x). 
    \end{equation}
    For each $\mathbb{E} \left(e^{i \widetilde{\phi}_j Z}\right)$, we have 
    \begin{align}
        \mathbb{E} \left(e^{i \widetilde{\phi}_j Z}\right) & = \left( \begin{array}{cc}
            \mathbb{E} e^{i \widetilde{\phi}_j } & 0 \\
            0 & \mathbb{E} e^{ - i \widetilde{\phi}_j }
        \end{array} \right)  \\
        & = \left( \begin{array}{cc}
            e^{i \phi_j } \mathbb{E} e^{i e_j } & 0 \\
            0 & e^{- i \phi_j } \mathbb{E} e^{ - i e_j }
        \end{array} \right) \\
        & = \left( \begin{array}{cc}
            e^{i \phi_j } \left( \mathbb{E} \cos e_j + i \mathbb{E} \sin e_j \right) & 0 \\
            0 & e^{- i \phi_j } \left( \mathbb{E} \cos e_j - i \mathbb{E} \sin e_j \right)
        \end{array} \right) \\
        & = \left( \begin{array}{cc}
            e^{i \phi_j } \left( \mathbb{E} \cos e_j \right) & 0 \\
            0 & e^{- i \phi_j } \left( \mathbb{E} \cos e_j \right)
        \end{array} \right) \\
        & =  \left( \mathbb{E} \cos e_j \right) e^{i \phi_j Z}, 
    \end{align}
    where the second to the last equation is because $e_j$ has even distribution function so $\mathbb{E} f(e_j) = 0$ for any odd function $f$. 
    Plugging this equation back to~\cref{eqn:proof_expectation_eq3} gives 
    \begin{equation}
         \left( \begin{array}{cc}
            \mathbb{E}\widetilde{P}(x) & \cdot \\
            \cdot & \cdot
        \end{array} \right) =  \left( \prod_{j=1}^d \mathbb{E} \cos e_j \right) e^{i \phi_d Z} R(x) e^{i \phi_{d-1} Z} R(x) \cdots e^{i \phi_1 Z} R(x) = \left( \prod_{j=1}^d \mathbb{E} \cos e_j \right) \left( \begin{array}{cc}
            P(x) & \cdot \\
            \cdot & \cdot
        \end{array} \right), 
    \end{equation}
    which indicates that 
    \begin{equation}\label{eqn:proof_expectation_eq4}
        \mathbb{E}\widetilde{P}(x) = \left( \prod_{j=1}^d  \mathbb{E}\cos e_j \right) P(x). 
    \end{equation}

    Now let us consider the multi-qubit noisy QSP circuit $\widetilde{U}$. 
    Notice that a single sample of $\widetilde{U}$ is exactly a QSP circuit with phase factors $\widetilde{\Phi}$, then according to~\cref{lem:qubitization},  each sample of $\widetilde{U}$ is a block-encoding of $\widetilde{P}(A)$, i.e., 
    \begin{equation}
        \widetilde{U} = \left( \begin{array}{cc}
            \widetilde{P}(A) & \cdot \\
            \cdot & \cdot
        \end{array} \right). 
    \end{equation}
    By taking the expectation and using~\cref{eqn:proof_expectation_eq4}, we obtain 
    \begin{equation}
        \mathbb{E} \widetilde{U} = \left( \begin{array}{cc}
            \mathbb{E} \widetilde{P}(A) & \cdot \\
            \cdot & \cdot
        \end{array} \right) = \left( \begin{array}{cc}
            \left( \prod_{j=1}^d \mathbb{E} \cos e_j \right) P(A) & \cdot \\
            \cdot & \cdot
        \end{array} \right). 
    \end{equation}
\end{proof}

\section{Algorithms}\label{sec:algorithm}

\cref{thm:QSP_noise_phase_shift} shows that the expectation of a noisy QSP circuit has a top-left sub-block to be a rescaling of $P(A)$. 
Notice that, although each sample of the noisy circuit is a block-encoding of $\widetilde{P}(A)$, the expected circuit is not necessarily a block-encoding of $P(A)$, as the expectation is not necessarily a unitary matrix. 
Nevertheless, we may still use~\cref{thm:QSP_noise_phase_shift} as a theoretical foundation to design quantum algorithm that can mitigate the effect of  errors in QSP by taking the ensemble average over measurements.  

Suppose that $p(x)$ is real polynomial satisfying the assumptions in~\cref{lem:QSP}, $\ket{\psi}$ is a known quantum state, and $A$ is a Hermitian matrix with $\|A\| \leq 1$. 
We assume access to the block-encoding $U_A$ of $A$, and the state preparation oracle $O_{\psi}$ of $\ket{\psi}$. 
In this section, we will discuss two quantum algorithms for $p(A)$. 
The first algorithm constructs a block-encoding of $p(A)$, and the second algorithm estimates the observable $\braket{\psi|p(A)^{\dagger}O p(A)|\psi}$ for a Hermitian matrix $O$.  

\subsection{Block-encoding polynomials}

To construct a block-encoding of $p(A)$, inspired by~\cref{thm:QSP_noise_phase_shift}, we may independently construct multiple samples of the block-encoding of $\widetilde{p}(A) = \frac{1}{2} (\widetilde{P}(A)+\widetilde{P}(A)^{\dagger})$ and then linearly combine them using the quantum linear combination of unitaries (LCU) technique~\cite{ChildsWiebe2012}. 
Detailed steps are as follows. 

\renewcommand{\tablename}{Box}
\begin{table}[htbp]
    \caption{Algorithm for constructing a block-encoding of $p(A)$ by noisy QSP}
  \renewcommand{\arraystretch}{1.2}
  \begin{tabular}{*{1}{@{}L{18cm}}}
    \toprule
    {\bfseries Algorithm \thetable} \quad Constructing a block-encoding of $p(A)$ by noisy QSP \tabularnewline
    \bottomrule
    \textbf{Input:} Block-encoding $U_A$ of $A$, a positive integer $M$ \tabularnewline
    \textbf{for} $m = 1,2,\cdots,M$ \textbf{do} \tabularnewline
    \quad Implement block-encodings $\widetilde{U}_m$ of $\widetilde{P}_{2m-1}(A)$, and $\widetilde{V}_m$ of $\widetilde{P}_{2m}(A)^{\dagger}$, independently via noisy QSP, where each $\widetilde{P}_{k}(x)$ is a noisy sample of $P(x)$ and $P(x)$ is the complex polynomial associated with $p(x)$ as in~\cref{lem:QSP}. \tabularnewline
    \textbf{end for}  \tabularnewline
    Construct the operator 
    \begin{equation}\label{eqn:LCU_op}
        \left(\mathrm{H}^{\otimes \log_2(2M)} \otimes I \right) \left( \sum_{m=1}^{M} \left(\ket{2m-2}\bra{2m-2} \otimes \widetilde{U}_m + \ket{2m-1}\bra{2m-1} \otimes \widetilde{V}_m \right)\right) \left(\mathrm{H}^{\otimes \log_2(2M)} \otimes I \right). 
    \end{equation} \tabularnewline
    \textbf{Output:} Block-encoding of $\widetilde{p}(A)$
    \\ \bottomrule
    \end{tabular}
    \label{alg:block_encoding}
\end{table}

\begin{thm}\label{thm:noisy_QSP_block_encoding}
    Let $U_A$ be a block-encoding of a Hermitian matrix $A$. 
    Then,~Algorithm~\ref{alg:block_encoding} constructs a block-encoding of $\left(\prod_{j=1}^d \mathbb{E} \cos e_j\right)(p(A)+E) $ with probability at least $1-\delta$ such that $\|E\| \leq \epsilon$, by choosing
        \begin{equation}
             M = \mathcal{O}\left( \frac{1}{\epsilon^2 \left(\prod_{j=1}^d \mathbb{E} \cos e_j\right)^2 } \log\left(\frac{1}{\delta}\right) \right),  
        \end{equation}
    and using
        \begin{equation}
            \mathcal{O}\left( \frac{d}{\epsilon^2 \left(\prod_{j=1}^d \mathbb{E} \cos e_j\right)^2 } \log\left(\frac{1}{\delta}\right) \right)
        \end{equation}
        queries to $U_A$. 
\end{thm}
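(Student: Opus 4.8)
The plan is to read off the operator that the LCU circuit in~\eqref{eqn:LCU_op} block-encodes, compute its expectation via \cref{thm:QSP_noise_phase_shift}, bound the fluctuation around that expectation by a matrix concentration inequality, and then tally the queries. The circuit in~\eqref{eqn:LCU_op} is a standard prepare--select--prepare (LCU) circuit: the $\log_2(2M)$ Hadamards prepare the uniform control state $\frac{1}{\sqrt{2M}}\sum_{k=0}^{2M-1}\ket k$, and the middle operator applies $\widetilde U_m$ (a block-encoding of $\widetilde P_{2m-1}(A)$) on the branch $\ket{2m-2}$ and $\widetilde V_m$ (a block-encoding of $\widetilde P_{2m}(A)^{\dagger}$) on the branch $\ket{2m-1}$. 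By the matrix-addition (generalized LCU) lemma recalled in \cref{sec:prelim}, this unitary is a block-encoding of $\widehat p(A):=\frac1M\sum_{m=1}^{M}X_m$ with $X_m:=\tfrac12\bigl(\widetilde P_{2m-1}(A)+\widetilde P_{2m}(A)^{\dagger}\bigr)$; each $X_m$ is a convex combination of two sub-blocks of unitaries, so $\norm{X_m}\le 1$, and the $X_m$ are i.i.d.\ because the QSP runs are independent.

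Next I would compute the mean. Each $\widetilde P_k(A)$ is the top-left sub-block of an independent noisy QSP circuit, so \cref{thm:QSP_noise_phase_shift} gives $\mathbb{E}\,\widetilde P_k(A)=c\,P(A)$ with $c=\prod_{j=1}^d\mathbb{E}\cos e_j\in\RR$, whence $\mathbb{E}\bigl[\widetilde P_{2m}(A)^{\dagger}\bigr]=\bigl(\mathbb{E}\,\widetilde P_{2m}(A)\bigr)^{\dagger}=c\,P(A)^{\dagger}$. Since $A$ is Hermitian and $\Re P=p$, one has $\tfrac12\bigl(P(A)+P(A)^{\dagger}\bigr)=p(A)$, so $\mathbb{E}X_m=c\,p(A)$ and therefore $\mathbb{E}\,\widehat p(A)=c\,p(A)$. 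Setting $E:=\tfrac1c\widehat p(A)-p(A)$, the circuit is a block-encoding of $c(p(A)+E)$ exactly as claimed, and it remains only to show $\norm E\le\epsilon$ with probability at least $1-\delta$.

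For the concentration, $E=\tfrac1{cM}\sum_m(X_m-\mathbb{E}X_m)$ is an average of i.i.d.\ zero-mean terms with $\norm{X_m-\mathbb{E}X_m}\le 2$, so the operator Hoeffding/Chernoff inequality gives $\Pr[\norm E\ge\epsilon]\le 2D\exp(-\Omega(Mc^2\epsilon^2))$, which is at most $\delta$ once $M=\mathcal{O}\bigl(c^{-2}\epsilon^{-2}\log(D/\delta)\bigr)$. The nominal dimension factor $D$ can be shaved off: every $X_m$ is a polynomial in the one fixed Hermitian matrix $A$, hence the $X_m$ commute and $E$ is diagonal in the eigenbasis of $A$, so $\norm E=\max_{x\in\mathrm{spec}(A)}\bigl|\tfrac1{cM}\sum_m(x_m(x)-\mathbb{E}x_m(x))\bigr|$, where $x_m$ is the scalar degree-$d$ polynomial realized on run $m$; a scalar Hoeffding bound plus a union bound over an $\mathcal{O}(d^2/\epsilon)$-point net of $[-1,1]$ (using Markov's inequality to control $|x_m'|$) recovers $M=\mathcal{O}\bigl(c^{-2}\epsilon^{-2}\log(1/\delta)\bigr)$ up to $\polylog$ factors. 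Finally, each of the $2M$ noisy QSP block-encodings costs $\mathcal{O}(d)$ queries to $U_A$ by \cref{lem:qubitization} while the Hadamards and controls cost none, so the total is $\mathcal{O}(dM)=\mathcal{O}\bigl(\tfrac{d}{\epsilon^2c^2}\log\tfrac1\delta\bigr)$.

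The main obstacle is the concentration step, and specifically obtaining a repetition count free of any Hilbert-space dimension factor: the off-the-shelf matrix Chernoff bound pays a $\log D$, and removing it requires the observation above that the random operators are simultaneously diagonalized by $A$, reducing the claim to a scalar statement made uniform over the spectrum (or over $[-1,1]$). A secondary point to handle carefully is that the LCU ``average of block-encodings'' step is legitimate even though $\mathbb{E}\,\widehat p(A)$ is in general not itself a sub-block of a unitary; this causes no difficulty, since the generalized LCU lemma block-encodes the sub-normalized linear combination of the unitaries $\widetilde U_m,\widetilde V_m$ whatever that combination happens to be.
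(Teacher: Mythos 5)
Your proposal follows essentially the same route as the paper's proof: read off, via the generalized LCU lemma, that the circuit in \cref{eqn:LCU_op} block-encodes the empirical average $\frac{1}{2M}\sum_{m}\bigl(\widetilde P_{2m-1}(A)+\widetilde P_{2m}(A)^{\dagger}\bigr)$, identify its expectation as $\bigl(\prod_{j}\mathbb{E}\cos e_j\bigr)p(A)$ via \cref{thm:QSP_noise_phase_shift}, concentrate around the mean with Hoeffding, and count $\mathcal{O}(Md)$ queries. The one place you go beyond the paper is the concentration step: the paper applies the scalar Hoeffding inequality directly to the operator-valued deviation $\widetilde{\mathcal{P}}-\bigl(\prod_j\mathbb{E}\cos e_j\bigr)p(A)$ (writing $|\cdot|$ of a matrix and invoking only the boundedness of each $\widetilde P_k$ on $[-1,1]$), which as written is rigorous only pointwise on the spectrum of $A$; your observation that all the random summands are polynomials in the single Hermitian matrix $A$, hence simultaneously diagonalizable, combined with a net over $[-1,1]$ controlled by the Markov brothers' derivative bound, is precisely what is needed to promote the pointwise bound to an operator-norm bound without a Hilbert-space-dimension factor, at the cost of a $\polylog(d/\epsilon)$ overhead that the paper's stated $M$ does not carry. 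This is a refinement of the same argument rather than a different proof; everything else matches.
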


\begin{proof}
    Let $\widetilde{W}$ denote the operator in~\cref{eqn:LCU_op}. 
    Notice that $\widetilde{U}_m$ and $\widetilde{V}_m$ are block-encodings of $\widetilde{P}_{2m-1}(A)$ and $\widetilde{P}_{2m}(A)^{\dagger}$, respectively. 
    According to~\cite[Lemma 52]{GilyenSuLowEtAl2019}, $\widetilde{W}$ is a block-encoding of $\frac{1}{2M}\sum_{m=1}^M (\widetilde{P}_{2m-1}(A) + \widetilde{P}_{2m}(A)^{\dagger})$, i.e., 
    \begin{equation}
        \left(\bra{0}\otimes \bra{0}\otimes I\right) \widetilde{W} \left(\ket{0}\otimes \ket{0} \otimes I\right) = \frac{1}{2M}\sum_{m=1}^M (\widetilde{P}_{2m-1}(A) + \widetilde{P}_{2m}(A)^{\dagger}). 
    \end{equation}
    Here the first $\ket{0}$ represents the LCU ancilla register, and the second $\ket{0}$ represents the QSP ancilla register. 
    Let 
    \begin{equation}
        \widetilde{\mathcal{P}} = \frac{1}{2M}\sum_{m=1}^M (\widetilde{P}_{2m-1}(A) + \widetilde{P}_{2m}(A)^{\dagger}), 
    \end{equation}
    so $\widetilde{W}$ block-encodes $\widetilde{\mathcal{P}}$. 

    We now show that $\widetilde{\mathcal{P}}$ is close to $\left(\prod_{j=1}^d \mathbb{E} \cos e_j\right) p(A)$ with high probability. 
    According to~\cref{lem:qubitization}, we have 
    \begin{equation}
        \mathbb{E} \widetilde{\mathcal{P}} = \frac{1}{2M} \sum_{m=1}^M  \left(\prod_{j=1}^d \mathbb{E} \cos e_j\right) \left(  P(A) + P(A)^{\dagger} \right) = \left(\prod_{j=1}^d \mathbb{E} \cos e_j\right) p(A). 
    \end{equation}
    Notice that each $\widetilde{P}_k$ is bounded within $[-1,1]$ as it is constructed by a QSP circuit. 
    Hoeffding's inequality implies that, for $\epsilon' > 0$, 
    \begin{align}
        \mathbb{P} \left( \left| \widetilde{\mathcal{P}} - \left(\prod_{j=1}^d \mathbb{E} \cos e_j\right) p(A) \right| \geq \epsilon' \right) &\leq 2 \exp\left( - \frac{2 (2M\epsilon')^2 }{ 2M (1-(-1))^2 }  \right) \\
        &= 2 \exp\left( - M \epsilon'^2  \right). 
    \end{align}
    To bound this by $\delta$, it suffices to choose 
    \begin{equation}
        M = \mathcal{O}\left( \frac{1}{\epsilon'^2} \log\left(\frac{1}{\delta}\right) \right). 
    \end{equation}
    We complete the proof by choosing $\epsilon' = \left(\prod_{j=1}^d \mathbb{E} \cos e_j\right) \epsilon$ and noticing that query complexity to $U_A$ of LCU is $\mathcal{O}(Md)$. 
\end{proof}

\subsection{Estimating observables}

Now we consider estimating the observable $\braket{\psi|p(A)^{\dagger}Op(A)|\psi}$ for a known state $\ket{\psi}$ and a Hermitian matrix $O$ with $\|O\|\leq 1$. 
Notice that, using the block-encoding of $p(A)$ we construct in the previous subsection, we can directly estimate the observable using the Hadamard test for non-unitary matrix~\cite{TongAnWiebe2021}. 
However, constructing this block-encoding requires coherent implementation of multiple QSP circuits and thus has deep circuit depth. 
To reduce the circuit depth, we shall consider the importance sampling technique to design a hybrid quantum-classical algorithm for estimating the observable. 
To this end, we first discuss a variant of Hadamard test for estimating random observable.

\subsubsection{Hadamard test for non-unitary random matrices}

Let $\widetilde{O}$ be a random matrix with $\|\widetilde{O}\|\leq 1$. 
Suppose that we are given a block-encoding $U_{\widetilde{O}}$ of $\widetilde{O}$ and that $\mathbb{E}\widetilde{O}$ is a Hermitian matrix. 
Notice that the unitary $U_{\widetilde{O}}$ is also a random matrix. 
The following algorithm estimates the observable $\braket{\psi|\mathbb{E}\widetilde{O}|\psi}$.

\begin{figure}
\centering
    \includegraphics[width=0.4\linewidth]{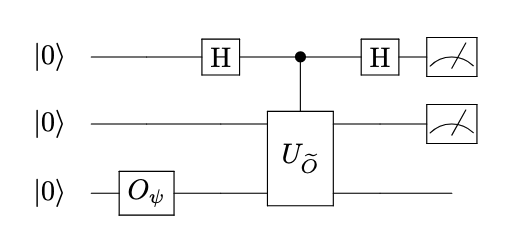}
    \caption{Quantum circuit for estimating $\braket{\psi|\mathbb{E}\widetilde{O}|\psi}$ of a random matrix $\widetilde{O}$ with Hermitian expectation. 
    Here $O_{\psi}$ is the state preparation oracle of $\ket{\psi}$, $U_{\widetilde{O}}$ is a block-encoding of $\widetilde{O}$, and $\mathrm{H}$ is the Hadamard gate. }
    \label{fig:random_Hadamard}
\end{figure}

\begin{table}[htbp]
    \caption{Algorithm for Hadamard test for non-unitary random matrices}
  \renewcommand{\arraystretch}{1.2}
  \begin{tabular}{*{1}{@{}L{18cm}}}
    \toprule
    {\bfseries Algorithm \thetable} \quad Hadamard test for non-unitary random matrices \tabularnewline
    \bottomrule
    \textbf{Input:} Block-encoding $U_{\widetilde{O}}$ of a random matrix $\widetilde{O}$, state preparation oracle $O_{\psi}$ of $\ket{\psi}$,  a positive integer $M$ \tabularnewline
    \textbf{for} $m = 1,2,\cdots,M$ \textbf{do} \tabularnewline
    \quad Implement the circuit in~\cref{fig:random_Hadamard} and construct a random variable $\widetilde{o}_m$ as 
    \begin{equation}
        \widetilde{o}_m = \begin{cases}
            1, & \text{if the measurement outcome is }\ket{0}\ket{0}, \\
            -1, & \text{if the measurement outcome is }\ket{1}\ket{0}, \\
            0, & \text{else}. 
        \end{cases}
    \end{equation} \tabularnewline
    \textbf{end for}  \tabularnewline
    Compute $\frac{1}{M} \sum_{m=1}^M \widetilde{o}_m$ as the estimator of the observable. \tabularnewline
    \textbf{Output:} An estimate of the observable $\braket{\psi|\mathbb{E}\widetilde{O}|\psi}$
    \\ \bottomrule
    \end{tabular}
    \label{alg:random_Hadamard}
\end{table}

\begin{thm}\label{thm:random_Hadamard}
    Let $\ket{\psi}$ be a quantum state and $\widetilde{O}$ be a random matrix with $\|\widetilde{O}\|\leq 1$ whose expectation value is Hermitian. 
    Suppose that we are given access to the state preparation oracle $O_{\psi}$ of $\ket{\psi}$ and a random block-encoding $U_{\widetilde{O}}$ of $\widetilde{O}$. 
    Then,~Algorithm~\ref{alg:random_Hadamard} outputs an $\epsilon$-approximation of $\braket{\psi|\mathbb{E}\widetilde{O}|\psi}$ with probability at least $1-\delta$, using queries to $O_{\psi}$ and $U_{\widetilde{O}}$ for $M$ times where 
    \begin{equation}
        M = \mathcal{O} \left( \frac{1}{\epsilon^2} \log\left(\frac{1}{\delta}\right) \right) . 
    \end{equation}
\end{thm}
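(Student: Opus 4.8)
The plan is to show that the per-run random variable $\widetilde{o}_m$ produced by the circuit in Figure~\ref{fig:random_Hadamard} has expectation exactly $\braket{\psi|\mathbb{E}\widetilde{O}|\psi}$, and then apply a standard concentration bound. First I would analyze a \emph{single} run for a \emph{fixed} realization of $\widetilde{O}$ (equivalently, a fixed unitary $U_{\widetilde{O}}$). This is precisely the Hadamard test for non-unitary matrices of~\cite{TongAnWiebe2021}: the control qubit is put in $\ket{+}$, $U_{\widetilde{O}}$ is applied controlled on it to the system register initialized in $O_\psi\ket{0}=\ket{\psi}$ together with the block-encoding ancilla in $\ket{0}$, a Hadamard is applied back on the control, and one measures the control together with the block-encoding ancilla. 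A direct computation of the resulting two-qubit marginal shows that, conditioned on the block-encoding ancilla returning $\ket{0}$, the control outcome probabilities are $\frac{1}{2}(1\pm\Re\braket{\psi|\widetilde{O}|\psi})$; combined with the definition of $\widetilde o_m$ (which assigns $+1,-1,0$ to the three relevant branches) this gives $\mathbb{E}[\widetilde o_m \mid \widetilde O] = \Re\braket{\psi|\widetilde{O}|\psi}$. (If one also wants the imaginary part, the standard $S$-gate variant applies; here the statement concerns the Hermitian-expectation case, so the real part suffices.)

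Next I would take the expectation over the randomness of $\widetilde{O}$. By the law of total expectation,
\begin{equation}
    \mathbb{E}[\widetilde{o}_m] = \mathbb{E}_{\widetilde{O}}\big[\,\mathbb{E}[\widetilde{o}_m \mid \widetilde{O}]\,\big] = \mathbb{E}_{\widetilde{O}}\,\Re\braket{\psi|\widetilde{O}|\psi} = \Re\braket{\psi|\mathbb{E}\widetilde{O}|\psi} = \braket{\psi|\mathbb{E}\widetilde{O}|\psi},
\end{equation}
where the last equality uses the hypothesis that $\mathbb{E}\widetilde{O}$ is Hermitian, so $\braket{\psi|\mathbb{E}\widetilde{O}|\psi}$ is already real. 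The key conceptual point, as emphasized in the surrounding text, is that even though the conditional outcome distribution differs from run to run, the runs $m=1,\dots,M$ are i.i.d. draws from the single \emph{marginal} distribution of $\widetilde o$ obtained by first sampling $\widetilde O$ and then sampling the measurement; hence the empirical mean $\frac1M\sum_m \widetilde o_m$ is an unbiased estimator of the target.

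Finally I would invoke concentration: each $\widetilde{o}_m \in \{-1,0,1\}$ is bounded, so Hoeffding's inequality gives
\begin{equation}
    \mathbb{P}\!\left( \Big| \tfrac{1}{M}\sum_{m=1}^M \widetilde{o}_m - \braket{\psi|\mathbb{E}\widetilde{O}|\psi} \Big| \geq \epsilon \right) \leq 2\exp\!\left( - \tfrac{M\epsilon^2}{2} \right),
\end{equation}
and requiring the right-hand side to be at most $\delta$ yields $M = \mathcal{O}\!\left(\epsilon^{-2}\log(1/\delta)\right)$. Since each run uses $O_\psi$ and $U_{\widetilde O}$ once, the query count matches the claim. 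I expect the only real obstacle to be the first step — carefully tracking the two-register measurement statistics of the non-unitary Hadamard test and correctly identifying which branches carry the $\pm 1$ versus $0$ weights — since everything after that is the routine law-of-total-expectation plus Hoeffding argument; I would lean on~\cite{TongAnWiebe2021} for the fixed-$\widetilde O$ computation and only need to add the averaging layer.
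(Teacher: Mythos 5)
Your proposal follows essentially the same route as the paper's proof: a fixed-realization analysis of the non-unitary Hadamard test giving $\mathbb{E}[\widetilde{o}_m\mid\widetilde{O}]=\tfrac{1}{2}\braket{\psi|(\widetilde{O}+\widetilde{O}^{\dagger})|\psi}=\Re\braket{\psi|\widetilde{O}|\psi}$, then the law of total expectation combined with Hermiticity of $\mathbb{E}\widetilde{O}$, and finally Hoeffding's inequality. The only nitpick is your phrase ``conditioned on the block-encoding ancilla returning $\ket{0}$'' --- the $\pm 1$ branches carry the \emph{joint} (unnormalized) probabilities, with the $0$ branch absorbing the rest, which is what makes the unconditional expectation come out right; your final formula is correct, so this is purely a wording issue.
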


\begin{proof}
    We first study the expectation of $\widetilde{o}$. 
    For a fixed sample of the circuit in~\cref{fig:random_Hadamard}, after applying the state preparation oracle $O_{\psi}$ and the first Hadamard gate, we obtain the state $\frac{1}{\sqrt{2}}(\ket{0}+\ket{1})\ket{0}\ket{\psi}$. 
    Applying the controlled block-encoding $U_{\widetilde{O}}$ yields 
    \begin{equation}
        \frac{1}{\sqrt{2}}\left(\ket{0} \ket{0}\ket{\psi} + \ket{1} \ket{0}\widetilde{O}\ket{\psi} \right) + \ket{1}\ket{\perp}, 
    \end{equation}
    where $\ket{\perp}$ is a possibly unnormalized vector such that $(\ket{0}\bra{0}\otimes I) \ket{\perp} = 0$. 
    We then apply the second Hadamard gate and obtain 
    \begin{equation}
        \frac{1}{2} \left(  \ket{0}\ket{0} (I+\widetilde{O})\ket{\psi} + \ket{1}\ket{0} (I-\widetilde{O})\ket{\psi} \right) + (\ket{0}-\ket{1})\ket{\perp}. 
    \end{equation}
    So we have 
    \begin{equation}
        \mathbb{P} \left( \widetilde{o} = 1 | \widetilde{O} \right) = \frac{1}{4} \left\| (I+\widetilde{O})\ket{\psi} \right\|^2 = \frac{1}{2} + \frac{1}{4}\braket{\psi| (\widetilde{O} + \widetilde{O}^{\dagger}) |\psi}, 
    \end{equation}
    \begin{equation}
        \mathbb{P} \left( \widetilde{o} = -1 | \widetilde{O} \right) = \frac{1}{4} \left\| (I-\widetilde{O})\ket{\psi} \right\|^2 = \frac{1}{2} - \frac{1}{4}\braket{\psi| (\widetilde{O} + \widetilde{O}^{\dagger}) |\psi}, 
    \end{equation}
    and thus 
    \begin{equation}
        \mathbb{E} (\widetilde{o}|\widetilde{O}) = \frac{1}{2} \braket{\psi| (\widetilde{O} + \widetilde{O}^{\dagger}) |\psi}. 
    \end{equation}
    By the law of total expectation and noticing that $\mathbb{E}(\widetilde{O}^{\dagger}) = (\mathbb{E}\widetilde{O})^{\dagger} = \mathbb{E}\widetilde{O}$, we have 
    \begin{equation}
        \mathbb{E} (\widetilde{o}) =  \mathbb{E}\left(\mathbb{E} (\widetilde{o}|\widetilde{O})\right) = \mathbb{E} \left( \frac{1}{2} \braket{\psi| (\widetilde{O} + \widetilde{O}^{\dagger}) |\psi} \right) = \braket{\psi|\mathbb{E}\widetilde{O}|\psi}, 
    \end{equation}
    which shows that $\widetilde{o}$ is an unbiased estimator of the desired observable. 

    We now estimate the number of samples needed to estimate the observable with high probability. 
    Notice that all the samples $\widetilde{o}_m$ are always bounded in $[-1,1]$. 
    Hoeffding's inequality implies that 
    \begin{align}
        \mathbb{P}\left( \left|\frac{1}{M} \sum_{m=1}^M \widetilde{o}_m - \braket{\psi|\mathbb{E}(\widetilde{O})|\psi}\right| \geq \epsilon \right)  & \leq 2 \exp \left( - \frac{2\epsilon^2}{\sum_{m=1}^M (2/M)^2 } \right) \\
        & = 2 \exp \left( - \frac{M\epsilon^2}{ 2 } \right). 
    \end{align}
    The choice of $M$ can then be derived by bounding the right hand side by $\delta$. 
\end{proof}

\subsubsection{Estimating observables of noisy QSP}

We now show how to use~Algorithm~\ref{alg:random_Hadamard} to estimate the observable $\braket{\psi|p(A)^{\dagger}O p(A)|\psi}$. 
Notice that we can rewrite the desired observable as 
\begin{equation}
    \braket{\psi|p(A)^{\dagger}O p(A)|\psi} = \frac{1}{4}\braket{\psi|(P(A)+P(A)^{\dagger})O (P(A)+P(A)^{\dagger}) |\psi}. 
\end{equation}

\begin{table}[htbp]
    \caption{Algorithm for estimating the observable by noisy QSP}
  \renewcommand{\arraystretch}{1.2}
  \begin{tabular}{*{1}{@{}L{18cm}}}
    \toprule
    {\bfseries Algorithm \thetable} \quad Estimating the observable by noisy QSP \tabularnewline
    \bottomrule
    \textbf{Input:} Block-encoding $U_A$ of $A$, block-encoding $U_{O}$ of a Hermitian matrix $O$, state preparation oracle $O_{\psi}$ of $\ket{\psi}$ \tabularnewline
    \textbf{for} $k = 1,2,3,4$ \textbf{do} \tabularnewline
    \quad Construct block-encodings of $\widetilde{P}_k(A)$ independently by noisy QSP, where each $\widetilde{P}_k(A)$ is a noisy sample of $P(x)$ and $P(x)$ is the complex polynomial associated with $p(x)$ as in~\cref{lem:QSP}.  \tabularnewline
    \textbf{end for}  \tabularnewline
    Construct a block-encoding $U_{\widetilde{O}}$ of $ \widetilde{O} = \frac{1}{4} \left( \widetilde{P}_1(A) + \widetilde{P}_2(A)^{\dagger} \right) O \left( \widetilde{P}_3(A) + \widetilde{P}_4(A)^{\dagger} \right)$.  \tabularnewline
    Estimate $o = \braket{\psi|\mathbb{E}\widetilde{O}|\psi}$ using~Algorithm~\ref{alg:random_Hadamard}.  \tabularnewline
    Compute $o / \left( \prod_{j=1}^d \cos e_j \right)^2$.  \tabularnewline
    \textbf{Output:} An estimate of the observable $\braket{\psi|p(A)^{\dagger}O p(A)|\psi}$
    \\ \bottomrule
    \end{tabular}
    \label{alg:observable}
\end{table}

\begin{thm}\label{thm:noisy_QSP_observable}
    Let $p(x)$ be a $d$-degree real polynomial satisfying the assumptions in~\cref{lem:QSP}, $A$ and $O$ be Hermitian matrices with $\|A\| \leq 1$ and $\|O\|\leq 1$, and $\ket{\psi}$ be a quantum state. 
    Suppose that we are given access to a block-encoding $U_A$ of $A$, block-encoding $U_O$ of $O$, and a state preparation oracle $O_{\psi}$ of $\ket{\psi}$. 
    Then,~Algorithm~\ref{alg:observable} outputs an $\epsilon$-approximation of $\braket{\psi|p(A)^{\dagger}O p(A)|\psi}$ with probability at least $1-\delta$. 
    Furthermore, 
    \begin{enumerate}
        \item the algorithm independently runs 
        \begin{equation}
            \mathcal{O}\left( \frac{1}{\epsilon^2 \left( \prod_{j=1}^d \mathbb{E} \cos e_j \right)^4 } \log\left(\frac{1}{\delta}\right)\right)
        \end{equation}
        many quantum circuits, 
        \item each circuit uses $\mathcal{O}(d)$ queries to $U_A$, and $\mathcal{O}(1)$ queries to $U_O$ and $O_{\psi}$. 
    \end{enumerate}
\end{thm}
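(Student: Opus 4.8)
The plan is to reduce the statement to the random Hadamard test of~\cref{thm:random_Hadamard}, applied to the random block-encoding $U_{\widetilde O}$ of $\widetilde O = \frac{1}{4}(\widetilde P_1(A)+\widetilde P_2(A)^\dagger)\,O\,(\widetilde P_3(A)+\widetilde P_4(A)^\dagger)$ constructed in~\cref{alg:observable}. The crux is to evaluate $\mathbb{E}\widetilde O$ explicitly. Since the four noisy QSP samples $\widetilde P_1,\widetilde P_2,\widetilde P_3,\widetilde P_4$ are drawn from mutually independent error realizations and $O$ is deterministic, the expectation factorizes entrywise, so that
\begin{equation}
    \mathbb{E}\widetilde O = \tfrac{1}{4}\,\mathbb{E}\!\left[\widetilde P_1(A)+\widetilde P_2(A)^\dagger\right]\,O\,\mathbb{E}\!\left[\widetilde P_3(A)+\widetilde P_4(A)^\dagger\right].
\end{equation}
Applying~\cref{thm:QSP_noise_phase_shift} to each sample gives $\mathbb{E}\widetilde P_k(A) = c\,P(A)$ with $c := \prod_{j=1}^d \mathbb{E}\cos e_j$, and since $A$ is Hermitian and $p$ is real we have $\frac{1}{2}(P(A)+P(A)^\dagger) = p(A)$ with $p(A)$ Hermitian; hence each factor equals $2c\,p(A)$, and $\mathbb{E}\widetilde O = c^2\,p(A)\,O\,p(A)$. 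In particular $\mathbb{E}\widetilde O$ is Hermitian and $\braket{\psi|\mathbb{E}\widetilde O|\psi} = c^2\braket{\psi|p(A)^\dagger O p(A)|\psi}$.

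Next I would check the hypotheses of~\cref{thm:random_Hadamard} and count the query cost of $U_{\widetilde O}$. Each $\widetilde P_k(A)$ is the top-left block of the unitary produced by a QSP circuit with the (noisy) phase factors, so $\|\widetilde P_k(A)\| \le 1$, whence $\|\widetilde O\| \le \frac{1}{4}\cdot 2 \cdot 1 \cdot 2 = 1$; thus $U_{\widetilde O}$ is a valid random block-encoding of $\widetilde O$ with Hermitian expectation, exactly the input assumed by~\cref{thm:random_Hadamard}. For the resources, $U_{\widetilde O}$ is assembled by: building each $\widetilde U_k$ (resp.\ $\widetilde V_k$) with a noisy QSP circuit using $\mathcal{O}(d)$ queries to $U_A$ and none to $U_O,O_\psi$ (\cref{lem:qubitization}); forming the two Hermitized sums with one LCU call each; and multiplying them against a block-encoding of $O$ (one query to $U_O$) via the product-of-block-encodings lemma. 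Each run of the circuit in~\cref{fig:random_Hadamard} then invokes $U_{\widetilde O}$ and $O_\psi$ once, giving item~(2).

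Finally I would invoke~\cref{thm:random_Hadamard}: running~\cref{alg:random_Hadamard} with a fresh independent sample of $\widetilde O$ in each of $M = \mathcal{O}(\epsilon'^{-2}\log(1/\delta))$ rounds yields an estimate $\hat o$ of $o = \braket{\psi|\mathbb{E}\widetilde O|\psi}$ with $|\hat o - o| \le \epsilon'$ with probability at least $1-\delta$. The algorithm outputs $\hat o / c^2$, whose deviation from $\braket{\psi|p(A)^\dagger O p(A)|\psi}$ is at most $\epsilon'/c^2$; choosing $\epsilon' = c^2\epsilon$ gives the claimed $\epsilon$-approximation, and the number of circuits becomes $M = \mathcal{O}(c^{-4}\epsilon^{-2}\log(1/\delta)) = \mathcal{O}\big(\epsilon^{-2}(\prod_{j=1}^d \mathbb{E}\cos e_j)^{-4}\log(1/\delta)\big)$, which is item~(1).

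The main obstacle is Step~1: one must use the mutual independence of the three pieces $\widetilde P_1/\widetilde P_2$, $O$, and $\widetilde P_3/\widetilde P_4$ (not just pairwise) together with the entrywise linearity of expectation and the Hermiticity of $A$ to land exactly on $c^2\,p(A)\,O\,p(A)$, and it is precisely the resulting Hermiticity of $\mathbb{E}\widetilde O$ that licenses the use of the plain random Hadamard test rather than a more elaborate variant. The remaining ingredients are routine: the scalar-to-matrix lift $\mathbb{E}\widetilde P(A) = c\,P(A)$ is already furnished by~\cref{thm:QSP_noise_phase_shift}, the bound $\|\widetilde O\| \le 1$ is immediate, and the sample and query counts follow from~\cref{thm:random_Hadamard} and the standard block-encoding arithmetic lemmas.
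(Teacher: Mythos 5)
Your proposal is correct and follows essentially the same route as the paper: use the independence of the four noisy QSP samples and \cref{thm:QSP_noise_phase_shift} to show $\mathbb{E}\widetilde O = \left(\prod_{j=1}^d \mathbb{E}\cos e_j\right)^2 p(A)^\dagger O p(A)$, then invoke \cref{thm:random_Hadamard} with target accuracy $\left(\prod_{j=1}^d \mathbb{E}\cos e_j\right)^2\epsilon$ to get the stated sample and query counts. Your explicit verification that $\|\widetilde O\|\le 1$ and that $\mathbb{E}\widetilde O$ is Hermitian (the hypotheses of \cref{thm:random_Hadamard}) is a small but welcome addition that the paper leaves implicit.
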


\begin{proof}
    For any $\widetilde{P}_k(A)$, according to~\cref{thm:QSP_noise_phase_shift}, we have $\mathbb{E} \widetilde{P}_k(A) = \left( \prod_{j=1}^d \mathbb{E} \cos e_j \right) P(A)$. 
    Using the independence among different $\widetilde{P}_k(A)$, we have 
    \begin{align}
        \mathbb{E} \widetilde{O} &= \frac{1}{4} \left( \mathbb{E}\widetilde{P}_1(A) + \mathbb{E}\widetilde{P}_2(A)^{\dagger} \right) O \left( \mathbb{E}\widetilde{P}_3(A) + \mathbb{E}\widetilde{P}_4(A)^{\dagger} \right) \\
        & = \frac{1}{4} \left( \prod_{j=1}^d \mathbb{E} \cos e_j \right)^2 \left( P(A) + P(A)^{\dagger} \right) O \left( P(A) + P(A)^{\dagger} \right) \\
        & = \left( \prod_{j=1}^d \mathbb{E} \cos e_j \right)^2 p(A)^{\dagger} O p(A). 
    \end{align}
    So $o / \left( \prod_{j=1}^d \cos e_j \right)^2 $ is an $\epsilon$-approximation of the desired observable, if $o$ approximates $\braket{\psi|\mathbb{E}\widetilde{O}|\psi}$ up to error $\left( \prod_{j=1}^d \cos e_j \right)^2\epsilon$. 
    Then the claimed query complexities can be derived by~\cref{thm:random_Hadamard}. 
\end{proof}

\section{Applications}\label{sec:applications}

We apply our algorithms to three specific applications: simulating a time-independent Hamiltonian evolution, solving a linear system of equations, and preparing the ground state of a Hamiltonian.
For technical clarity, in this section, we assume in addition to the conditions in~\cref{sec:error_model}, that all the random noises $e_j$ are also identically distributed. 
For all $j$, we denote 
\begin{equation}
    c = \mathbb{E} \cos e_j \approx 1-\frac{1}{2}\nu, 
\end{equation}
where $\nu = \mathbb{E}(e^2)$ is the variance of the error.

\subsection{Hamiltonian simulation}

Let $H$ be a Hamiltonian, and we are given a block-encoding $U_H$ of $H/\|H\|$. 
The goal of the Hamiltonian simulation problem is to construct a block-encoding of $e^{-iHT}$, or prepare a quantum state $e^{-iHT}\ket{\psi}$ for a known input state $\ket{\psi}$, or estimating the observable $\braket{\psi|e^{iHT} O e^{-iHT} |\psi}$. 

Notice that 
\begin{equation}
    e^{-iHT} = \cos(HT) - i \sin (HT). 
\end{equation}
So we may separately implement $\cos(HT)$ and $\sin (HT)$ by their polynomial approximation established in~\cite[Lemma 57]{GilyenSuLowEtAl2019}, and then linearly combine them together to obtain $e^{-iHT}$. 
We summarize this result in the following lemma. 
\begin{lem}[{\cite[Lemma 57 and 59]{GilyenSuLowEtAl2019}}]\label{lem:Hsim_polynomial_approx}
    Let $\beta > 0$ and $\epsilon \in (0,1/e)$. 
    Then there exist two $d$-degree polynomials $p_1(x)$ and $p_2(x)$ with parity $d\text{ mod } 2$ such that 
    \begin{equation}
        \max_{x\in[-1,1]} |\cos(\beta x) - p_1(x)| \leq \epsilon, \quad \max_{x\in[-1,1]} |\sin(\beta x) - p_2(x)| \leq \epsilon. 
    \end{equation}
    Furthermore, for any positive real number $q > 0$, the degree of the polynomials can be bounded as 
    \begin{equation}
        d \leq e^{q+1} \beta/2 + \log(4/(5\epsilon))/q + 1. 
    \end{equation}
\end{lem}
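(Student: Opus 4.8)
The plan is to use the Jacobi--Anger expansion to write $\cos(\beta x)$ and $\sin(\beta x)$ as Chebyshev series whose coefficients are Bessel functions of the first kind, truncate to degree $d$, and control the truncation error by a tail bound on Bessel functions. Substituting $x=\cos\theta$ and using $T_n(\cos\theta)=\cos(n\theta)$, the expansion $e^{i\beta\cos\theta}=\sum_{n\in\ZZ}i^n J_n(\beta)e^{in\theta}$ gives, upon taking real and imaginary parts (the odd, resp.\ even, harmonics drop out since $\Re(i^n)=0$ for odd $n$ and $\Im(i^n)=0$ for even $n$),
\begin{equation}
\cos(\beta x)=J_0(\beta)+2\sum_{k=1}^\infty(-1)^kJ_{2k}(\beta)T_{2k}(x),\qquad
\sin(\beta x)=2\sum_{k=0}^\infty(-1)^kJ_{2k+1}(\beta)T_{2k+1}(x).
\end{equation}
I would define $p_1$ as the partial sum of the first series retaining all $T_\ell$ with $\ell\le d$ ($d$ even) and $p_2$ as the partial sum of the second retaining all $T_\ell$ with $\ell\le d$ ($d$ odd); each partial sum automatically has the parity of its truncation index, since only even, resp.\ odd, Chebyshev polynomials appear. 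Because $|T_n(x)|\le1$ on $[-1,1]$, the sup-norm error of either truncation is at most $2\sum_{\ell\ge d+1}|J_\ell(\beta)|$, so the whole problem reduces to a quantitative bound on this Bessel tail.

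For the tail I would start from the elementary bound $|J_\ell(\beta)|\le(\beta/2)^\ell/\ell!$ (valid for all integers $\ell\ge0$). Once $\ell+1>\beta/2$ the ratio of consecutive terms $(\beta/2)/(\ell+1)$ is strictly below $1$, so for any truncation index $R\ge\beta/2$ the tail $\sum_{\ell\ge R}(\beta/2)^\ell/\ell!$ is bounded by a fixed constant times its leading term $(\beta/2)^R/R!$, and Stirling's inequality $R!\ge(R/e)^R$ bounds that leading term by $(e\beta/(2R))^R$. Introducing the free parameter $q>0$, the choice $R\ge e^{q+1}\beta/2$ forces $e\beta/(2R)\le e^{-q}$, hence $(e\beta/(2R))^R\le e^{-qR}$, which decays geometrically in $R$; taking $R$ to be $e^{q+1}\beta/2$ plus an extra $\log(4/(5\epsilon))/q$ drives $e^{-qR}$ below $\epsilon$ once the fixed prefactors are absorbed, and a final additive $1$ rounds $R$ up to an integer of the required parity. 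Setting $d$ equal to this integer gives $d\le e^{q+1}\beta/2+\log(4/(5\epsilon))/q+1$; the argument for $\sin$ is identical with $R$ chosen odd, producing $p_2$.

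I expect the main obstacle to be the quantitative Bessel tail estimate and the optimization over $q$ rather than the overall structure. The bound $|J_\ell(\beta)|\le(\beta/2)^\ell/\ell!$ is only useful once $\ell$ is a constant multiple of $\beta$, which is exactly why the degree estimate must contain the irreducible $\Theta(\beta)$ contribution $e^{q+1}\beta/2$, and pinning down the clean constants (the factor $4/5$ and the additive $1$) requires careful bookkeeping of how the geometric-tail factor, the Stirling factor, and the rounding to the correct parity combine, rather than merely producing an $\Or(\beta+\log(1/\epsilon)/q)$ bound. A secondary point worth making explicit is the derivation of the two Chebyshev expansions above --- that $\cos$ produces only even and $\sin$ only odd Chebyshev polynomials --- since it is what justifies the parity claim for $p_1$ and $p_2$; and one should also record that $|p_1(x)|,|p_2(x)|$ stay bounded on $[-1,1]$ (a triangle-inequality consequence of the error estimate together with $|\cos|,|\sin|\le1$), which is needed for the downstream QSP applications.
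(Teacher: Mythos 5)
The paper does not actually prove this lemma---it is quoted directly from [Lemma 57 and 59] of \cite{GilyenSuLowEtAl2019}---and your reconstruction (Jacobi--Anger expansion of $e^{i\beta\cos\theta}$ into Chebyshev polynomials with Bessel-function coefficients, truncation respecting the even/odd parity of the harmonics, and a Stirling-based tail bound $\abs{J_\ell(\beta)}\le(\beta/2)^\ell/\ell!\le(e\beta/(2\ell))^\ell$ optimized over the free parameter $q$) is exactly the argument underlying that cited proof. The structure is sound, and the only outstanding item is the constant bookkeeping (the factor $4/5$ and the additive $+1$) that you already flag yourself; nothing there threatens the argument.
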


In the complexity estimates of our algorithms (\cref{thm:noisy_QSP_block_encoding} and~\cref{thm:noisy_QSP_observable}), we encounter a rescaling factor $\prod_{j=1}^d \mathbb{E} \cos e_j$. It can be estimated as
\begin{align}\label{eqn:cd}
    \prod_{j=1}^d \mathbb{E} \cos e_j =c^d = (1-\frac{1}{2}\nu)^d \approx e^{-\frac{1}{2}d\nu}.
\end{align}
When the system noise is sufficiently small, i.e., $d\nu \ll \Or(1)$, the factor is independent of the system size.

We are now ready to analyze the complexity of using noisy QSP for the Hamiltonian simulation problem. 

\begin{thm}\label{thm:app_Hsim}
    Let $H$ be a Hamiltonian, $O$ be a Hermitian matrix with $\|O\|\leq 1$, and $\ket{\psi}$ be a quantum state. 
    Suppose that we are given a block-encoding $U_H$ of $H/\|H\|$, a block-encoding $U_O$ of $O$, and a state preparation oracle $O_{\psi}$ of $\ket{\psi}$. 
    Consider using noisy QSP to implement $e^{-iHT}$ with i.i.d. random additive noises $e_j$ in the phase factors satisfying $c = \mathbb{E} \cos e_j$.  
    Let $q$ be an arbitrary real positive constant, then  
    \begin{enumerate}
        \item we can construct a block-encoding of $\frac{1}{\alpha}(e^{-iHT} + E)$ with $\|E\| \leq \epsilon$, success probability at least $1-\delta$, and a block-encoding factor 
        \begin{equation}
            \frac{1}{\alpha} \approx \frac{1}{4}e^{-\frac{1}{2}d\nu}, 
        \end{equation}
        using 
        \begin{equation}
            \mathcal{O}\left( e^{d\nu}\left(\frac{1}{\epsilon}\right)^{2} \left( \|H\|T + \log\left(\frac{1}{\epsilon}\right) \right)\log\left(\frac{1}{\delta}\right) \right)
        \end{equation}
        queries to $U_H$, 
        \item the quantum state $e^{-iHT}\ket{\psi}$ can be prepared up to error $2\epsilon$ with probability at least $1-\delta$, using 
        \begin{equation}
          \mathcal{O}\left( e^{2d\nu}\left(\frac{1}{\epsilon}\right)^{2} \left( \|H\|T + \log\left(\frac{1}{\epsilon}\right) \right)\left(\log\left(\frac{1}{\delta}\right)\right)^2 \right)
        \end{equation}
        queries to $U_H$, and 
        \begin{equation}
        \mathcal{O}\left( e^{d\nu} \log\left(\frac{1}{\delta}\right) \right)
        \end{equation}
        queries to $O_{\psi}$, 
        \item the observable $\braket{\psi|e^{iHT} O e^{-iHT} |\psi}$ can be estimated up to error $\epsilon$ with probability at least $1-\delta$, using  
        \begin{equation}
            \mathcal{O}\left( e^{2d\nu}\left(\frac{1}{\epsilon}\right)^{2} \log\left(\frac{1}{\delta}\right) \right)
        \end{equation}
        independent runs of a quantum circuit, each of which takes $\mathcal{O}(\|H\|T + \log(1/\epsilon))$ queries to $U_H$, and $\mathcal{O}(1)$ queries to $U_O$ and $O_{\psi}$. 
    \end{enumerate}
\end{thm}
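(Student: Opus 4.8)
The plan is to bolt together machinery that is already in place: the polynomial approximations of $\cos$ and $\sin$ from \cref{lem:Hsim_polynomial_approx}, the noisy block-encoding construction of \cref{thm:noisy_QSP_block_encoding}, the noisy observable estimator of \cref{thm:noisy_QSP_observable} and \cref{alg:observable}, and---for the state-preparation statement---plain post-selection on the block-encoding ancillas. Write $\beta = \|H\|T$ and $A = H/\|H\|$; since $A$ is Hermitian with $\|A\|\le 1$ its spectrum lies in $[-1,1]$, so every sup-norm bound on a polynomial on $[-1,1]$ transfers verbatim to an operator-norm bound at $A$. \emph{Step 1 (approximation).} Invoke \cref{lem:Hsim_polynomial_approx} with this $\beta$, the given constant $q$, and target accuracy a small constant fraction of $\epsilon$, producing real polynomials $p_1 \approx \cos(\beta x)$ and $p_2 \approx \sin(\beta x)$ with the parities chosen to match ($p_1$ even, $p_2$ odd) and degree $d = \mathcal{O}(\|H\|T + \log(1/\epsilon))$; after a harmless $(1-\mathcal{O}(\epsilon))$ rescaling they satisfy the hypotheses of \cref{lem:QSP}, and $\|(p_1(A) - i p_2(A)) - e^{-iHT}\| \le \mathcal{O}(\epsilon)$.

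\emph{Step 2 (claim 1).} Apply \cref{thm:noisy_QSP_block_encoding} once to $p_1$ and once to $p_2$, each with error budget $\mathcal{O}(\epsilon)$ and failure probability $\delta/2$, yielding block-encodings of $c^{d}(p_1(A)+E_1)$ and $c^{d}(p_2(A)+E_2)$ with $\|E_i\|\le\mathcal{O}(\epsilon)$, using $M = \mathcal{O}(c^{-2d}\epsilon^{-2}\log(1/\delta))$ noisy QSP circuits, hence $\mathcal{O}(dM)$ queries to $U_H$, per polynomial. Glue the two with one further LCU layer (coefficients $\propto (1,-i)$, together with the factor-$1/2$'s that extract the real parts $\cos,\sin = \Re$ of the complex QSP outputs): by \cref{lem:QSP}, Lemma 52 of~\cite{GilyenSuLowEtAl2019}, and \eqref{eqn:cd} this is a block-encoding of $\tfrac1\alpha(e^{-iHT}+E)$ with $\tfrac1\alpha \approx \tfrac14 c^{d} \approx \tfrac14 e^{-d\nu/2}$ and, on the success event (probability $\ge 1-\delta$ by a union bound over the two calls), $\|E\| \le \|p_1(A)-\cos(HT)\| + \|p_2(A)-\sin(HT)\| + \|E_1\| + \|E_2\| \le \epsilon$. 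Using $c^{-2d}\approx e^{d\nu}$, the $U_H$-count is $\mathcal{O}(dM) = \mathcal{O}(e^{d\nu}\epsilon^{-2}(\|H\|T+\log(1/\epsilon))\log(1/\delta))$.

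\emph{Step 3 (claims 2 and 3).} For the state, apply the Step-2 block-encoding to $O_\psi\ket{0}=\ket\psi$ and measure the ancillas; outcome $\ket 0$ leaves the normalized state $\widetilde{\mathcal P}\ket\psi/\|\widetilde{\mathcal P}\ket\psi\|$, which is $2\epsilon$-close to $e^{-iHT}\ket\psi$ whenever $\|E\|\le\epsilon$, while a single run succeeds with probability $\approx 1/\alpha^2\approx e^{-d\nu}/16$ because $e^{-iHT}$ is unitary and $\|E\|\ll1$. Coherent amplitude amplification is unavailable---the implemented unitary is a fresh random circuit on each run---so we just repeat: $\mathcal{O}(\alpha^2\log(1/\delta)) = \mathcal{O}(e^{d\nu}\log(1/\delta))$ runs produce a success with probability $\ge 1-\delta$, each run spending one query to $O_\psi$ and $\mathcal{O}(dM)$ to $U_H$; taking the per-circuit failure probability $\mathcal{O}(\delta/\alpha^2)$ (so $\log(1/\delta')=\mathcal{O}(\log(1/\delta))$ in the relevant regime $d\nu=\mathcal{O}(1)$) keeps the returned state $2\epsilon$-accurate with probability $\ge 1-\delta$, giving $\mathcal{O}(e^{d\nu}\log(1/\delta))$ queries to $O_\psi$ and $\mathcal{O}(e^{2d\nu}\epsilon^{-2}(\|H\|T+\log(1/\epsilon))(\log(1/\delta))^2)$ queries to $U_H$. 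For the observable, mirror \cref{alg:observable}/\cref{thm:noisy_QSP_observable} with the complex transform $e^{-iHT}\approx p_1(A)-ip_2(A)$ in place of the real $p$: on each run draw fresh noisy QSP samples of the polynomials underlying $p_1$ and $p_2$, assemble by an LCU layer a block-encoding of a noisy $e^{\mp iHT}$ on each side of $O$, i.e.\ of $\widetilde O = (\widetilde{e^{iHT}})^{\dagger}\,O\,\widetilde{e^{-iHT}}$ with $\mathbb{E}\widetilde O = c^{2d}\,e^{iHT}Oe^{-iHT} + \mathcal{O}(\epsilon\, c^{2d})$ by \cref{thm:QSP_noise_phase_shift} and sample independence, then estimate $\braket{\psi|\mathbb{E}\widetilde O|\psi}$ by \cref{thm:random_Hadamard} to precision $\mathcal{O}(c^{2d}\epsilon)$ and divide by $c^{2d}$. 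That takes $\mathcal{O}(c^{-4d}\epsilon^{-2}\log(1/\delta)) = \mathcal{O}(e^{2d\nu}\epsilon^{-2}\log(1/\delta))$ independent runs of a depth-$\mathcal{O}(d)$ circuit, each making $\mathcal{O}(d)=\mathcal{O}(\|H\|T+\log(1/\epsilon))$ queries to $U_H$ and $\mathcal{O}(1)$ to $U_O$ and $O_\psi$.

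\emph{Main obstacle.} The conceptual content is light, so the work is in the bookkeeping, and the two delicate points are: (i) propagating three error sources---polynomial truncation, noisy-QSP ensemble averaging, and (for claim 2) renormalization of the post-selected state---through the triangle inequality with matching union bounds over the several failure events; and (ii) the fact that, once noise is present, the ``block-encoding'' is a genuinely random object, implementing a different unitary each time it is run, so the usual ``amplify a fixed block-encoding'' reasoning does not apply and one must argue directly that repeated post-selection under fresh noise still converges to $e^{-iHT}\ket\psi$, with the success probability pinned near $1/\alpha^2$ precisely because $\|E\|\le\epsilon\ll1$. Nailing the constant in $\tfrac1\alpha\approx\tfrac14 e^{-d\nu/2}$ is a minor instance of the same accounting: it is $c^d$ times the factor-$1/2$'s from extracting real parts and from the $\cos$--$\sin$ combination.
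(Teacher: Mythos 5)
Your proposal is correct, and for claims 1 and 2 it follows essentially the same route as the paper: approximate $\cos(\beta x)$ and $\sin(\beta x)$ via \cref{lem:Hsim_polynomial_approx}, build each noisy block-encoding with \cref{thm:noisy_QSP_block_encoding}, glue them with one LCU layer to get the factor $\tfrac14 c^d$, and then post-select and repeat $\mathcal{O}(c^{-2d}\log(1/\delta))$ times for the state (the paper likewise does not amplitude-amplify here, and it uses the same $2\epsilon$ normalized-vector bound you invoke).

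For claim 3 you genuinely diverge. The paper splits the target as $\braket{\psi|e^{iHT}Oe^{-iHT}|\psi}=\braket{\psi|\cos(HT)\,O\cos(HT)|\psi}+\braket{\psi|\sin(HT)\,O\sin(HT)|\psi}$ and estimates the two real-polynomial observables separately via \cref{thm:noisy_QSP_observable}; you instead assemble, per run, a noisy block-encoding of the full complex conjugation $\widetilde O=(\widetilde{e^{iHT}})^{\dagger}O\,\widetilde{e^{-iHT}}$ and feed it directly to the random Hadamard test of \cref{thm:random_Hadamard}. Your route costs a few more independent QSP samples per circuit (to extract real parts on both sides before the $\cos - i\sin$ combination) but yields the same $\mathcal{O}(c^{-4d}\epsilon^{-2}\log(1/\delta))$ run count and the same per-circuit depth. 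It also has a substantive advantage: the paper's splitting silently discards the cross terms $i\braket{\psi|(\sin(HT)\,O\cos(HT)-\cos(HT)\,O\sin(HT))|\psi}$, which form a real quantity that vanishes only when $O$ commutes with $H$; your direct construction keeps them, since $\mathbb{E}\widetilde O$ is proportional to the full (Hermitian) operator $(p_1+ip_2)(A)\,O\,(p_1-ip_2)(A)$, so \cref{thm:random_Hadamard} applies without that restriction. In short, your part-3 argument is not just an acceptable alternative but repairs a gap in the paper's own decomposition.
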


\begin{proof}
    We first start with constructing a block-encoding of $\cos(HT)$. 
    \cref{lem:Hsim_polynomial_approx} shows that for any positive real number $q$, there exists a real polynomial $p_1(x)$ with degree 
    \begin{equation}
        d \leq e^{q+1} \beta/2 + \log(8/(5\epsilon))/q + 1
    \end{equation}
    such that 
    \begin{equation}
        \max_{x \in [-1,1]}| \cos(\beta x)/2 - p_1(x) | \leq \epsilon/4. 
    \end{equation}
    By choosing $\beta = \|H\|T$ and using~\cref{thm:noisy_QSP_block_encoding}, a block-encoding of $\frac{c^d}{2} (\cos(HT) + E_1)$ can be constructed with $\|E_1\| \leq \epsilon/2$ and probability at least $1-\delta/2$, using 
    \begin{equation}\label{eqn:proof_Hsim_eq1}
        \mathcal{O}\left( \frac{d}{\epsilon^2 c^{2d} } \log\left(\frac{1}{\delta}\right) \right). 
    \end{equation}
    queries to $U_H$. 
    Similarly, a block-encoding of $\frac{c^d}{2} (\sin(HT) + E_2)$ can be constructed with the same cost and error. 
    Using LCU again, we may construct a block-encoding of $\frac{c^d}{4} (\cos(HT) - i\sin(HT) + E_1 - i E_2) $ with one query to the block-encodings of $\frac{c^d}{2} (\cos(HT) + E_1)$ and $\frac{c^d}{2} (\sin(HT) + E_2)$. 
    This is in turn a block-encoding of $\frac{c^d}{4} (e^{-iHT} + E)$ with $\|E\| \leq \epsilon$, success probability at least $1-\delta$, and query complexity as in~\cref{eqn:proof_Hsim_eq1}. 
    The scalings claimed in the first part of the theorem can then be directly derived from~\eqref{eqn:cd}.

    To prepare the quantum state $e^{-iHT}\ket{\psi}$, we may simply apply the block-encoding on the input state $\ket{\psi}$ and measure the ancilla register onto $0$. 
    Specifically, after applying the block-encoding of $\frac{c^d}{4} (e^{-iHT} + E)$, we obtain the state 
    \begin{equation}
        \frac{c^d}{4} \ket{0}  (e^{-iHT} + E) \ket{\psi} + \ket{\perp}
    \end{equation}
    where $\ket{\perp}$ is a possibly unnormalized orthogonal state. 
    Upon measuring the ancilla register onto $0$, we obtain the state $(e^{-iHT} + E) \ket{\psi} / \|(e^{-iHT} + E) \ket{\psi}\|$. 
    To bound the distance between this state and the ideal state $e^{-iHT}\ket{\psi}$, we use the linear algebra result for two possibly unnormalized vectors $\vec{a}$ and $\vec{b}$ that 
    \begin{align}
        \left\|\frac{\vec{a}}{\|\vec{a}\|} - \frac{\vec{b}}{\|\vec{b}\|}\right\| &= \frac{\left\|\|\vec{b}\|\vec{a} -  \|\vec{a}\|\vec{b} \right\|}{\|\vec{a}\|\|\vec{b}\|} \\
        &= \frac{1}{\|\vec{a}\|\|\vec{b}\|} \left\| \|\vec{b}\|\vec{a} - \|\vec{a}\|\vec{a} + \|\vec{a}\|\vec{a}  - \|\vec{a}\|\vec{b}\right\| \\
        & \leq \frac{1}{\|\vec{a}\|\|\vec{b}\|} \left(\left\| \|\vec{b}\|\vec{a} - \|\vec{a}\|\vec{a}\right\| + \left\| \|\vec{a}\|\vec{a}  - \|\vec{a}\|\vec{b}\right\|\right)  \\
        & = \frac{1 }{\|\vec{b}\|} \left(  \left\| \|\vec{b}\| - \|\vec{a}\|\right\| +  \left\| \vec{a}  - \vec{b}\right\| \right) \\
        & \leq \frac{2}{\|\vec{b}\|} \left\| \vec{a}  - \vec{b}\right\|. 
    \end{align}
    Let $\vec{a} = (e^{-iHT} + E) \ket{\psi}$ and $\vec{b} = e^{-iHT}\ket{\psi}$, we have 
    \begin{equation}
        \left\| \frac{(e^{-iHT} + E) \ket{\psi}}{\|(e^{-iHT} + E) \ket{\psi}\|} - e^{-iHT}\ket{\psi} \right\| \leq 2 \|E\ket{\psi}\| \leq 2\epsilon.  
    \end{equation}
    The success probability per run is 
    \begin{equation}
        \frac{c^{2d}}{16} \|(e^{-iHT} + E)\ket{\psi}\|^2 (1-\delta) \geq \frac{c^{2d}}{128}. 
    \end{equation}
    If we run the algorithm for $K$ times, then the probability that at least one run is successful becomes at least 
    \begin{equation}
        1 - \left(1-\frac{c^{2d}}{128}\right)^K. 
    \end{equation}
    In order to bound this from below by $1-\delta$, it suffices to choose 
    \begin{equation}
        K = \mathcal{O}\left( \frac{1}{\log\left(\frac{1}{1-c^{2d}/32}\right)} \log\left(\frac{1}{\delta}\right) \right) = \mathcal{O}\left( \frac{1}{c^{2d} } \log\left(\frac{1}{\delta}\right) \right). 
    \end{equation}
    The overall complexity is $K$ times the cost of a single run, and by plugging in the estimate of $c^d$ in~\eqref{eqn:cd} we obtain the scalings as claimed. 

    For estimating the observable $\braket{\psi|e^{iHT} O e^{-iHT} |\psi}$, let $p_1(x)$ and $p_2(x)$ be the real polynomials that approximate $\cos(\|H\| T x)/2$ and $\sin(\|H\| T x)/2$ up to error $\epsilon'$, respectively, as in~\cref{lem:Hsim_polynomial_approx}. 
    The degree of these two polynomials are 
    \begin{equation}
        d \leq e^{q+1} \|H\| T /2 + \log(2/(5\epsilon'))/q + 1. 
    \end{equation}
    We write 
    \begin{equation}
        \frac{1}{2}\cos(HT) = \frac{1}{2} p_1(H/\|H\|) + E_1, \quad \frac{1}{2}\sin(HT) = \frac{1}{2} p_2(H/\|H\|) + E_2, 
    \end{equation}
    where $\|E_1\|$ and $\|E_2\|$ are bounded by $\epsilon'$. 
    Then 
    \begin{align}
        \braket{\psi|e^{iHT} O e^{-iHT} |\psi} &= \braket{\psi|\cos(HT) O \cos(HT) |\psi} +  \braket{\psi|\sin(HT) O \sin(HT) |\psi} \\
        & = \braket{\psi|p_1(H/\|H\|) O p_1(H/\|H\|) |\psi} +  \braket{\psi|p_2(H/\|H\|) O p_2(H/\|H\|) |\psi} + E, 
    \end{align}
    where 
    \begin{equation}
        E = 4\left( \braket{\psi|E_1 O \cos(HT) |\psi} + \braket{\psi|E_1 O E_1 |\psi} + \braket{\psi|E_2 O \sin(HT) |\psi} + \braket{\psi|E_2 O E_2 |\psi} \right), 
    \end{equation}
    and 
    \begin{equation}
        \|E\| \leq 8 (\epsilon' + \epsilon'^2). 
    \end{equation}
    Therefore, we may estimate the desired observable $\braket{\psi|e^{iHT} O e^{-iHT} |\psi} $ up to error $\epsilon$ by choosing $\epsilon' = \epsilon/32$ and estimating both $\braket{\psi|p_1(H/\|H\|) O p_1(H/\|H\|) |\psi}$ and $\braket{\psi|p_2(H/\|H\|) O p_2(H/\|H\|) |\psi}$ up to error $\epsilon/4$. 
    The claimed complexity directly follows from~\cref{thm:noisy_QSP_observable} and~\eqref{eqn:cd}.  
\end{proof}

\subsection{Solving linear systems of equations}

The goal of a quantum linear system problem (QLSP) is to prepare a normalized solution of $Ax = b$, i.e., 
\begin{equation}
    \ket{x} = \frac{A^{-1} \ket{b}}{ \|A^{-1}\ket{b}\|}, 
\end{equation}
or estimating the corresponding observables. 
Since QLSP only needs normalized solution, we may assume $\|A\| = \|b\| = 1$ without loss of generality, and assume access to the block-encoding $U_A$ of $A$ and state preparation oracle $O_b$ of $\ket{b}$. 
Furthermore, we only focus on the case where $A$ is a Hermitian matrix. 
The general case can be reduced to the Hermitian one by the standard dilation trick~\cite{HarrowHassidimLloyd2009}. 

Note that~\cite[Corollary 69]{GilyenSuLowEtAl2019} gives a polynomial approximation of the function $1/x$ over an interval away from $0$. 
Again the noisy QSP may implement this polynomial approximation solve the QLSP. 
We study the complexity in terms of the tolerated error $\epsilon$ and the condition number $\kappa = \|A\|\|A^{-1}\|$. 

\begin{thm}\label{thm:app_QLSP}
    Let $A$ be a Hermitian matrix with $\|A\| = 1$ and condition number $\kappa$, $O$ be a Hermitian matrix with $\|O\| \leq 1$, and $\ket{b}$ be a quantum state. 
    Suppose that we are given a block-encoding $U_A$ of $A$, a block-encoding $U_O$ of $O$, and a state preparation oracle $O_b$ of $\ket{b}$. 
    Consider the noisy QSP algorithms for the linear system problem $Ax = b$ with i.i.d. random additive noises $e_j$ in the phase factors satisfying $c = \mathbb{E} \cos e_j$. 
    Then 
    \begin{enumerate}
        \item a block-encoding of $\frac{1}{\alpha}(A^{-1} + E)$ can be constructed with error $\|E\| \leq \epsilon$, success probability at least $1-\delta$, and a block-encoding factor 
        \begin{equation}
            \frac{1}{\alpha} \approx \frac{3}{4\kappa}e^{-\frac{1}{2}d\nu},
        \end{equation}
        using 
        \begin{equation}
            \mathcal{O}\left( e^{d\nu}\left(\frac{1}{\epsilon}\right)^2\kappa^3\log\left(\frac{\kappa}{\epsilon}\right)\log\left(\frac{1}{\delta}\right) \right)
        \end{equation}
        queries to $U_A$. 
        \item the quantum state $\ket{x} = x/\|x\|$ can be prepared with error at most $2\epsilon$ and probability at least $1-\delta$, using 
        \begin{equation}
            \mathcal{O}\left( e^{2d\nu}\left(\frac{1}{\epsilon}\right)^2\kappa^3\log\left(\frac{\kappa}{\epsilon}\right)\left(\log\left(\frac{1}{\delta}\right)\right)^2 \right)
        \end{equation}
        queries to $U_A$, and 
        \begin{equation}
            \mathcal{O}\left( e^{d\nu}\log\left(\frac{1}{\delta}\right) \right) 
        \end{equation}
        queries to $O_b$. 
        \item the observable $x^* O x$ can be estimated with error at most $\epsilon$ and probability at least $1-\delta$, using 
        \begin{equation}
            \mathcal{O}\left( e^{2d\nu}\frac{\kappa^{4}}{\epsilon^{2} } \log\left(\frac{1}{\delta}\right)\right) 
        \end{equation}
        independent runs of a quantum circuit, each of which takes $\mathcal{O}(\kappa\log(\kappa/\epsilon))$ queries to $U_A$, and $\mathcal{O}(1)$ queries to $U_O$ and $O_b$. 
    \end{enumerate}
\end{thm}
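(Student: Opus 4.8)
The plan is to mirror the three-part structure of the Hamiltonian-simulation argument in~\cref{thm:app_Hsim}, with the polynomial approximation of $1/x$ taking the place of the approximations of $\cos$ and $\sin$ there. \textbf{Step 1 (block-encoding of $A^{-1}$).} I start from the standard polynomial approximation of the inverse~\cite[Corollary 69]{GilyenSuLowEtAl2019}: for a target accuracy $\epsilon_{\mathrm{alg}}$ there is an odd real polynomial $p$ with $|p(x)|\le 1$ on $[-1,1]$, of degree $d=\mathcal{O}(\kappa\log(\kappa/\epsilon_{\mathrm{alg}}))$, with $|p(x)-\tfrac{3}{4\kappa}\tfrac{1}{x}|\le\epsilon_{\mathrm{alg}}$ on $D_\kappa:=[-1,-1/\kappa]\cup[1/\kappa,1]$. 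Since $\|A\|=1$ and $A$ has condition number $\kappa$, every eigenvalue of $A$ lies in $D_\kappa$, hence $\|p(A)-\tfrac{3}{4\kappa}A^{-1}\|\le\epsilon_{\mathrm{alg}}$. This $p$ meets the hypotheses of~\cref{lem:QSP}, so feeding it into~\cref{thm:noisy_QSP_block_encoding} produces, with probability at least $1-\delta$, a block-encoding of $c^{d}(p(A)+E_0)$ with $\|E_0\|\le\epsilon_{\mathrm{imp}}$, using $M=\mathcal{O}(c^{-2d}\epsilon_{\mathrm{imp}}^{-2}\log(1/\delta))$ noisy QSP circuits combined by LCU. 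Choosing $\epsilon_{\mathrm{alg}},\epsilon_{\mathrm{imp}}=\Theta(\epsilon/\kappa)$ and combining the two error sources by the triangle inequality gives a block-encoding of $\tfrac1\alpha(A^{-1}+E)$ with $\|E\|\le\epsilon$ and $\tfrac1\alpha=\tfrac{3}{4\kappa}c^{d}\approx\tfrac{3}{4\kappa}e^{-d\nu/2}$; substituting $d=\mathcal{O}(\kappa\log(\kappa/\epsilon))$, the estimate $c^{d}\approx e^{-d\nu/2}$ from~\eqref{eqn:cd}, and the LCU query count $\mathcal{O}(dM)$ yields the claimed $\mathcal{O}(e^{d\nu}\epsilon^{-2}\kappa^{3}\log(\kappa/\epsilon)\log(1/\delta))$ queries to $U_A$.

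\textbf{Step 2 (preparing $\ket{x}$).} Apply the block-encoding from Step~1 to $\ket b$ and measure the ancillas onto $\ket 0$; conditioned on success the state is $(A^{-1}+E)\ket b/\|(A^{-1}+E)\ket b\|$. Using the perturbed-normalized-vector bound $\|\vec a/\|\vec a\|-\vec b/\|\vec b\|\|\le 2\|\vec a-\vec b\|/\|\vec b\|$ established in the proof of~\cref{thm:app_Hsim} with $\vec a=(A^{-1}+E)\ket b$ and $\vec b=A^{-1}\ket b$, the distance to $\ket x$ is at most $2\|E\ket b\|/\|A^{-1}\ket b\|\le 2\epsilon$, since $\|A^{-1}\ket b\|\ge\|b\|/\|A\|=1$. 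The per-shot success probability is $\tfrac1{\alpha^{2}}\|(A^{-1}+E)\ket b\|^{2}$ times the $(1-\delta)$ factor from Step~1; boosting it to $1-\delta$ by independent repetition multiplies the Step~1 cost by the required number of repeats, and plugging in $\tfrac1\alpha\approx\tfrac{3}{4\kappa}e^{-d\nu/2}$ gives the stated queries to $U_A$ and $O_b$.

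\textbf{Step 3 (estimating $x^{\ast}Ox$).} Here I bypass the deep LCU circuit and apply~\cref{thm:noisy_QSP_observable} directly with the same $p$. Since $p(A)\approx\tfrac{3}{4\kappa}A^{-1}$, we have $\braket{b|p(A)^{\dagger}Op(A)|b}\approx(\tfrac{3}{4\kappa})^{2}\braket{b|A^{-1}OA^{-1}|b}=(\tfrac{3}{4\kappa})^{2}\,x^{\ast}Ox$ for $x=A^{-1}\ket b$, so it suffices to estimate the noisy-QSP observable to precision $\Theta(\epsilon/\kappa^{2})$. By~\cref{thm:noisy_QSP_observable} this costs $\mathcal{O}(c^{-4d}(\epsilon/\kappa^{2})^{-2}\log(1/\delta))=\mathcal{O}(e^{2d\nu}\kappa^{4}\epsilon^{-2}\log(1/\delta))$ independent runs of a depth-$\mathcal{O}(d)=\mathcal{O}(\kappa\log(\kappa/\epsilon))$ circuit, each with $\mathcal{O}(1)$ queries to $U_O$ and $O_b$; the degree bound is unchanged because $\log(\kappa/\epsilon_{\mathrm{alg}})=\Theta(\log(\kappa/\epsilon))$ when $\epsilon_{\mathrm{alg}}=\Theta(\epsilon/\kappa^{2})$, and the quadratic term $\|p(A)-\tfrac{3}{4\kappa}A^{-1}\|^2$ in the observable error is of the same order.

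\textbf{Main obstacle.} The only genuinely new difficulty relative to~\cref{thm:app_Hsim} is that a QSP-implementable polynomial can approximate $1/x$ only after a $\Theta(1/\kappa)$ rescaling, so the block-encoding normalization $1/\alpha$ is itself $\kappa$-dependent. Carefully propagating this factor through the error budget (it forces $\epsilon_{\mathrm{alg}},\epsilon_{\mathrm{imp}}\sim\epsilon/\kappa$ for the block-encoding and $\sim\epsilon/\kappa^{2}$ for the observable), the ensemble size $M$, and the success-probability amplification in Step~2 — together with checking that the spectrum of $A$ really lies in the validity domain $D_\kappa$ of the approximation — is the part that needs care; the remainder is a direct transcription of the Hamiltonian-simulation proof.
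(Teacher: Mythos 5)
Your proposal follows essentially the same route as the paper's proof: the odd polynomial from \cite[Corollary 69]{GilyenSuLowEtAl2019} for $\tfrac{3}{4\kappa}\tfrac1x$ fed into \cref{thm:noisy_QSP_block_encoding} with $\epsilon'=\Theta(\epsilon/\kappa)$, measurement plus repetition for state preparation, and \cref{thm:noisy_QSP_observable} at precision $\Theta(\epsilon/\kappa^2)$ for the observable, so the argument is correct and matches. The one place you wave your hands --- asserting that the per-shot success probability $\alpha^{-2}\|(A^{-1}+E)\ket b\|^2=\Theta(c^{2d}\|A^{-1}\ket b\|^2/\kappa^2)$ yields exactly the stated $\kappa^3$ query count in part 2, which strictly requires tracking the $\|A^{-1}\ket b\|^2/\kappa^2$ factor --- is glossed in precisely the same way in the paper's own proof.
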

\begin{proof}
    \cite[Lemma 69]{GilyenSuLowEtAl2019} shows that there exists an odd $d$-degree real polynomial $p(x)$ such that $\max_{x\in[-1,1]}|p(x)| \leq 1$,  
    \begin{equation}
        \max_{x\in[-1,-1/\kappa]\cup[1/\kappa,1]}\left| \frac{3}{4} \frac{1}{\kappa x} - p(x) \right| \leq \epsilon', 
    \end{equation}
    and 
    \begin{equation}
        d = \mathcal{O}\left( \kappa \log\left(\frac{1}{\epsilon'}\right) \right). 
    \end{equation}
    According to~\cref{thm:noisy_QSP_block_encoding}, we may implement a block-encoding of $c^d (p(A) + E' )$ with error $\|E'\| \leq \epsilon'$ and query complexity 
    \begin{equation}
        \mathcal{O}\left( \frac{d}{\epsilon'^2 c^{2d}} \log\left(\frac{1}{\delta}\right) \right). 
    \end{equation}
    This unitary is in turn a block-encoding of $\frac{3c^d}{4\kappa}\left( A^{-1} + E \right)$ with error $\|E\| \leq 8\kappa \epsilon'/3$. 
    We may choose $\epsilon' = 3\epsilon/(8\kappa)$ to bound this error by $\epsilon$. 
    Similarly, we encounter the factor $c^d$ which can be estimated as in Eq.~\eqref{eqn:cd}. Then we can derive the claimed complexity for constructing block-encoding. 

    Applying this block-encoding to the state $\ket{0}\ket{b}$ yields 
    \begin{equation}
        \frac{3c^d}{4\kappa} \left( \ket{0}(A^{-1}+E)\ket{b} \right) + \ket{\perp}. 
    \end{equation}
    A measurement of the ancilla register onto $0$ gives the state $(A^{-1}+E)\ket{b}/\|(A^{-1}+E)\ket{b}\|$, which is an $2\epsilon$ approximation of $\ket{x}$, and the success probability is $\Omega(\kappa^2/(c^{2d}\|A^{-1}\ket{b}\|^2)) = \Omega(1/c^{2d})$. 
    This can be proved using the same approach as in~\cref{thm:app_Hsim}. 

    For estimating the observable, we need to determine the error level in estimating polynomial observable. 
    Following the notations, we have 
    \begin{equation}
         A^{-1} = \frac{4\kappa}{3} p(A) + \frac{4\kappa}{3} E',  
    \end{equation}
    where $\|E'\| \leq \epsilon'$. 
    Then we have 
    \begin{align}
        x^{*} O x &= \braket{b|A^{-1}OA^{-1}|b} \\
        &= \frac{16\kappa^2}{9} \braket{b|p(A)^{\dagger}Op(A)|b} + E,
    \end{align}
    where 
    \begin{equation}
        E = \frac{16\kappa^2}{9} \left( 2\braket{b|E'Op(A)|b} + \braket{b|E'OE'|b} \right), 
    \end{equation}
    with $\|E\| \leq \mathcal{O}(\kappa^2\epsilon')$. 
    Therefore, in order to bound the observable estimation error by $\epsilon$, it suffices to choose $\epsilon' = \mathcal{O}(\epsilon/\kappa^2)$ and estimate $\braket{b|p(A)^{\dagger}Op(A)|b}$ with error at most $\mathcal{O}(\epsilon/\kappa^2)$. 
    \cref{thm:noisy_QSP_observable} gives the desired complexity. 
\end{proof}

\subsection{Ground state preparation}
Given any positive semidefinite Hermitian matrix $\tilde{H}$, we properly normalize and shift it into the Hamiltonian $H$ with $\|H\| \le 1$, ground state energy $\lambda_0 >0$ and spectral gap $\Delta:=\lambda_1-\lambda_0$. The objective here is to prepare the ground state $\ket{\psi_0}$ of the Hamiltonian $H$, with $\gamma=|\langle \phi | \psi_0 \rangle|$ overlap between the initial state $\ket{\phi}$ and the ground state $\ket{\psi_0}$. Assuming access to the state preparation oracle $O_I$ for $\ket{\phi}$, we employ noisy QSP to approximate the desired filter function and thereby address the problem, following the approach outlined in Ref.\cite{Dong_Lin_Tong_2022}.

Consider the filter function $F(x)$ defined in \cite{Dong_Lin_Tong_2022}, which incorporates a cosine transformation in its argument. By the construction of [\cite{LowChuang2019}, Corollary 7], it can be approximated to accuracy $\epsilon$ by a degree $d=\mathcal{O}(\Delta^{-1}\log(1/\epsilon))$ polynomial. [Theorem 1~\cite{Dong_Lin_Tong_2022}] ensures that any admissible real polynomial $g$ admits a QETU circuit yielding a block-encoding of $g(\cos(H))$ through interleaved $X$-rotations and Hamiltonian simulation oracles. Moreover, as shown in [\cite{Dong_Lin_Tong_2022} Theorem~16], this QETU circuit is equivalent to a QSP circuit obtained by Hadamard conjugation on the ancilla. Consequently, a noisy QSP circuit with access to Hamiltonian simulation suffices to approximate the filter function for solving the problem. We study the complexity in terms of the spectral gap $\Delta$ and the initial overlap $\gamma$.
\begin{thm}\label{thm:app_GSP} Let $H$ be a Hermitian matrix with spectral norm $\|H\|\le 1$, ground state energy $\lambda_0 >0$ and spectral gap $\Delta$, $O$ be a Hermitian matrix with $\|O\| \le 1$. Suppose that we are given a Hamiltonian simulation oracle $O_H$ of $H$, a state preparation oracle $O_I$ for the initial state $\ket{\phi}$ and a block-encoding $U_O$ of $O$. Consider the noisy QSP algorithms for preparing the ground state of $H$ with i.i.d. random additive noises $e_j$ in the phase factors satisfying $c= \mathbb{E}\cos e_j$. Then
\begin{enumerate}
    \item a block-encoding of $\frac{1}{\alpha}(F(\cos H)+E)$ can be constructed with error $\|E\| \le \epsilon$, success probability at least $1-\delta$, and a block-encoding factor
    \begin{align}
        \frac{1}{\alpha} \approx e^{-\frac{1}{2}d\nu},
    \end{align}
    using
   \begin{align}
        \mathcal{O}\left(e^{d\nu}\frac{\Delta^{-1}}{\epsilon^{2}} \log(\frac{1}{\epsilon})\log(\frac{1}{\delta})\right)
    \end{align}
    queries to $O_H$. 
    \item  the ground state $\ket{\psi_0}$ can be prepared with error at most $2\epsilon$ and probability at least $1-\delta$, using 
    \begin{align}
        \mathcal{O}\left(e^{\frac{3}{2}d\nu}\frac{\Delta^{-1}}{\epsilon^{2}\gamma} \log(\frac{1}{\epsilon})\left(\log(\frac{1}{\delta})\right)^2\right)
    \end{align}
    queries to $O_H$ and 
       \begin{equation}
        \mathcal{O}\left(e^{\frac{1}{2}d\nu}\frac{1}{\gamma}\log(\frac{1}{\delta})\right)
    \end{equation}
    queries to $O_I$ with amplitude amplification. 
    \item the observable $\bra{\psi_0}O\ket{\psi_0}$ can be estimated up to error $\epsilon$ with probability at least $1-\delta$, using
    \begin{equation}
        \mathcal{O}\left(e^{2d\nu}\frac{\gamma^{-4}}{\epsilon^{2}} \log(\frac{1}{\delta})\right)
    \end{equation}
    independent runs of a quantum circuit, each of which takes $\mathcal{O}(\Delta^{-1}\log(1/\epsilon\gamma^2))$ queries to $O_H$, and $\mathcal{O}(1)$ queries to $U_O$ and $O_I$. 
\end{enumerate}
\end{thm}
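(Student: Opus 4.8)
The plan is to follow the template of the proofs of \cref{thm:app_Hsim} and \cref{thm:app_QLSP}: reduce each of the three parts to the corresponding general guarantee --- \cref{thm:noisy_QSP_block_encoding} for parts~1 and~2, \cref{thm:noisy_QSP_observable} for part~3 --- applied with $A=\cos(H)$ and $U_A$ the cosine block-encoding, then substitute the rescaling factor $c^d\approx e^{-\frac{1}{2}d\nu}$ from~\eqref{eqn:cd} together with the degree bound $d=\mathcal{O}(\Delta^{-1}\log(1/\epsilon'))$ and the initial overlap $\gamma$.

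First I would isolate the noiseless ingredient. By \cite{Dong_Lin_Tong_2022} together with \cite[Corollary~7]{LowChuang2019}, there is a real polynomial $g$ of degree $d=\mathcal{O}(\Delta^{-1}\log(1/\epsilon'))$ with $|g|\le1$ on $[-1,1]$ and parity $d\bmod2$ such that $\|g(\cos H)-F(\cos H)\|\le\epsilon'$; since the filter obeys $F(\cos H)=\ket{\psi_0}\bra{\psi_0}+\mathcal{O}(\epsilon')$ in operator norm, we also get $g(\cos H)\ket{\phi}=\braket{\psi_0|\phi}\ket{\psi_0}+\mathcal{O}(\epsilon')$ with $|\braket{\psi_0|\phi}|=\gamma$. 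The cosine block-encoding --- the controlled oracle $\ket{0}\bra{0}\otimes e^{iH}+\ket{1}\bra{1}\otimes e^{-iH}$ conjugated by Hadamards on the control qubit --- is a genuine block-encoding of $\cos(H)$ costing $\mathcal{O}(1)$ queries to $O_H$, and by \cite[Theorem~16]{Dong_Lin_Tong_2022} the QETU circuit realizing $g(\cos H)$ through interleaved $X$-rotations is Hadamard-conjugate on the ancilla to a QSP circuit whose interleaved $Z$-rotations carry exactly the additive phase noise $e_j$. Hence \cref{thm:QSP_noise_phase_shift,thm:noisy_QSP_block_encoding,thm:noisy_QSP_observable} apply here with each query to $U_A$ counted as one query to $O_H$.

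For part~1, \cref{thm:noisy_QSP_block_encoding} produces a block-encoding of $c^d\bigl(g(\cos H)+E_0\bigr)$ with $\|E_0\|\le\epsilon'$, success probability $\ge1-\delta$, and $\mathcal{O}\bigl(d\,c^{-2d}(\epsilon')^{-2}\log(1/\delta)\bigr)$ queries to $O_H$; absorbing the $\epsilon'$ filter error into $E$ and taking $\epsilon'=\Theta(\epsilon)$ gives $\frac{1}{\alpha}\approx c^d\approx e^{-\frac{1}{2}d\nu}$ and, with $d=\mathcal{O}(\Delta^{-1}\log(1/\epsilon))$, the stated query count. For part~2, I would apply this block-encoding circuit $W$ to $\ket{0}\ket{\phi}$ (with $\ket{\phi}$ prepared by $O_I$); projecting the ancilla onto $\ket{0}$ succeeds with amplitude $\Theta(c^d\gamma)$ and yields a state $\mathcal{O}(\epsilon)$-close to $\ket{\psi_0}$ after the $1/\|F(\cos H)\ket{\phi}\|=\Theta(1/\gamma)$ renormalization, handled by the unnormalized-vector estimate used in \cref{thm:app_QLSP}. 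Amplitude amplification then uses $\mathcal{O}(c^{-d}\gamma^{-1})$ rounds (each a constant number of applications of $W$, $W^\dagger$ and $O_I$), and $\mathcal{O}(\log(1/\delta))$ outer repetitions boost the overall success to $1-\delta$; multiplying out with $M=\mathcal{O}(c^{-2d}\epsilon^{-2}\log(1/\delta))$, $c^{-d}\approx e^{\frac{1}{2}d\nu}$ and $c^{-3d}\approx e^{\frac{3}{2}d\nu}$ produces the claimed $O_H$- and $O_I$-query complexities.

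For part~3, from $g(\cos H)=\ket{\psi_0}\bra{\psi_0}+\mathcal{O}(\epsilon')$ and $\|O\|\le1$ we get $\braket{\phi|g(\cos H)^\dagger O g(\cos H)|\phi}=\gamma^2\braket{\psi_0|O|\psi_0}+\mathcal{O}(\epsilon')$, so it suffices to set $\epsilon'=\Theta(\gamma^2\epsilon)$, estimate the left-hand side to additive error $\Theta(\gamma^2\epsilon)$, and divide by $\gamma^2$. \cref{thm:noisy_QSP_observable} does this with $\mathcal{O}\bigl(\gamma^{-4}\epsilon^{-2}c^{-4d}\log(1/\delta)\bigr)$ independent runs of a depth-$\mathcal{O}(d)$ circuit, each making $\mathcal{O}(d)$ queries to the cosine block-encoding (hence to $O_H$) and $\mathcal{O}(1)$ queries to $U_O$ and $O_I$; substituting $c^{-4d}\approx e^{2d\nu}$ and $d=\mathcal{O}(\Delta^{-1}\log(1/(\epsilon\gamma^2)))$ reproduces the stated bound. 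The main obstacle I anticipate is not any single estimate but the bookkeeping at the interface: verifying that the QETU-to-QSP Hadamard conjugation leaves the noise model intact (so the noisy $X$-rotations of QETU map to the noisy $Z$-rotations that \cref{thm:QSP_noise_phase_shift} covers), and propagating the coupled small parameters --- the filter-approximation error $\epsilon'$, the Hoeffding statistical error, and the $1/\gamma$ and $1/\gamma^2$ renormalizations --- so that the final errors collapse cleanly while the $\gamma$-dependence stays polynomial rather than leaking into the exponent of the noise overhead.
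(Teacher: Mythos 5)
Your proposal follows essentially the same route as the paper's proof: approximate the filter $F$ by the degree-$d=\mathcal{O}(\Delta^{-1}\log(1/\epsilon'))$ polynomial $g$, apply \cref{thm:noisy_QSP_block_encoding} to the cosine block-encoding (via the QETU-to-QSP correspondence) for parts 1 and 2 with the $\Theta(c^d\gamma)$ post-selection amplitude and amplitude amplification, and apply \cref{thm:noisy_QSP_observable} with $\epsilon'=\Theta(\gamma^2\epsilon)$ for part 3, matching the paper's choices of $\epsilon'$ and the resulting $c^{-2d}$, $c^{-3d}$, and $c^{-4d}$ overheads. The only cosmetic difference is that the paper projects the QETU ancilla onto $\ket{+}$ rather than $\ket{0}$; the argument and the complexities are otherwise identical.
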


\begin{proof}
Consider the filter function introduced in \cite{Dong_Lin_Tong_2022}, which is a real even polynomial satisfying $\max_{x \in [-1,1]} |F(x)| \le 1$, 
\begin{equation}
\left\{
\begin{aligned}
    & |F(x)-1| \le \epsilon', \quad x\in [\cos(\mu+\tfrac{\Delta}{2}),\cos(\eta)] \\
    & |F(x)| \le \epsilon', \quad x\in [\cos(1-\eta),\cos(\mu-\tfrac{\Delta}{2})]
\end{aligned}
\right.
\end{equation}
where $0<\eta\le \lambda_0 \le \mu-\Delta/2<\mu+\Delta/2\le\lambda_1<1-\eta$. It can be constructed via a polynomial $g(x)$ given in [\cite{LowChuang2019}, Corollary 7] with degree 
\begin{equation}
    d=\mathcal{O}(\Delta^{-1}\log(1/\epsilon)).
\end{equation} According to Theorem \ref{thm:noisy_QSP_block_encoding} while replacing to the block-encoding $p(A)$ with $U_A$ access to the consine block-encoding $g(\cos H)$ with $O_H$ access, we can implement a block encoding of $c^d (g(\cos H)+ E')$ with error $\|E'\| \le \epsilon'$ and query complexity
\begin{equation}
        \mathcal{O}\left( \frac{d}{\epsilon'^2 c^{2d}} \log\left(\frac{1}{\delta}\right) \right). 
    \end{equation}
The unitary is then a block-encoding of $c^d(F(\cos H)+E)$ with error $\|E\| \le 2\epsilon'$. Then we can choose $\epsilon' = \epsilon/2$ to bound the error by $\epsilon$. 
Utilizing the estimate $c^d = e^{-\frac{1}{2}d\nu}$, we conclude with the claimed complexity for constructing the block-encoding. 

Let us write the initial state in the eigenbasis of $H$, i.e.
\begin{equation}
    \ket{\phi} = \sum_{k} \alpha_k \ket{\psi_k},
\end{equation}
where $|\alpha_0| = \gamma >0$. Then applying this block-encoding to the state $\ket{+}\ket{\phi}$ yields 
\begin{align}
    c^d\left(\ket{+}(F(\cos H)+E)\ket{\phi}\right) &= c^d(\ket{+}(F(\cos H)+E) \sum_{k} \alpha_k \ket{\psi_k}) \\
    & =c^d(\ket{+}\sum_k \alpha_k F(\cos \lambda_k)\ket{\psi_k}+E\ket{\psi_k})\\
    & = c^d (\ket{+}(F(\cos H)+E)\ket{\phi}+\ket{-}\ket{\perp}),
\end{align}
where $\ket{\perp}$ is some unnormalized quantum state. Measuring the ancilla register in $\ket{+}$ gives the state $(F(\cos H)+E)\ket{\phi}/\|(F(\cos H)+E)\ket{\phi}\|$. This state is an $2\epsilon$ approximation of $F(\cos H)\ket{\phi}/\|F(\cos H)\ket{\phi}\| \approx \ket{\psi_0}$ with the success probability per run as
\begin{equation}
   p = c^{2d}\|(F(\cos H))\ket{\phi}\|^2\gamma^2(1-\delta) \ge \frac{c^{2d}\gamma^2}{2}.
\end{equation}
To bound the success probability below by $1-\delta$, it suffices to run the algorithm
\begin{equation}
    K = \mathcal{O}\left(\frac{1}{c^{2d}\gamma^2}\log(\frac{1}{\delta})\right)
\end{equation}
times. Plugging in the estimate of $c^d$, we get the scaling for preparing the ground state. Further, we can use amplitude amplification to reduce the complexity. Through amplitude amplification, the repetition time $K = O(\frac{1}{p})$ can be quadratically reduced to $O(\frac{1}{\sqrt{p}})$, then to reach $1-\delta$ success probability, we should repeat 
\begin{equation}
    K = \mathcal{O}\left(\frac{1}{c^{d}\gamma}\log(\frac{1}{\delta})\right)
\end{equation}
runs. Plugging in the estimate of $c^d$ leads to the claimed complexity. 

For estimating the observables, we note that
\begin{equation}
    F(\cos H) = g(\cos H) + E',
\end{equation}
where $\|E\| \le \epsilon'$. Then we have
\begin{align}
    \bra{\psi_0}O\ket{\psi_0} & = \frac{1}{\gamma^2}\bra{\phi}F(\cos H) O F(\cos H)\ket{\phi}\\
    & = \frac{1}{\gamma^2} \left(\bra{\phi}g(\cos H) O g(\cos H)\ket{\phi}+E\right),
\end{align}
where $E = 2\bra{\phi}g(\cos H)OE'\ket{\phi}+2\bra{\phi}E'OE'\ket{\phi} \le 2(\epsilon'+\epsilon^{'2})$. Therefore, to bound the observable estimation error by $\epsilon $, it suffices to choose $\epsilon' = \epsilon\gamma^2/8$ and estimate $\bra{\phi}g(\cos H) O g(\cos H)\ket{\phi}$ with error at most $\epsilon\gamma^2/2$. According to Theorem~\ref{thm:noisy_QSP_observable},
\begin{equation}
     \mathcal{O}\left( \frac{1}{(\epsilon\gamma^2)^2 \left( \prod_{j=1}^d \mathbb{E} \cos e_j \right)^4 } \log\left(\frac{1}{\delta}\right)\right)
\end{equation}
independent runs are needed, leading to our claimed complexity.

\end{proof}

\end{document}